\newtheorem{thm}{Theorem}[section]
\newtheorem{defi}[thm]{Definition}
\newtheorem{lem}[thm]{Lemma}
\newtheorem{rk}[thm]{Remark}
\newcommand{\R}{\mathbb{R}}
\newcommand{\Z}{\mathbb{Z}}
\newcommand{\N}{\mathbb{N}}
\newcommand{\tet}{\theta}
\newcommand{\bydef}{\stackrel{\Delta}{=}}
\newcommand{\hinf}{$H_{\infty} \hspace{1mm}$}
\DeclareMathOperator{\MAD}{MAD}
\DeclareMathOperator{\MATI}{MATI}
\DeclareMathOperator{\col}{col}
\newenvironment{proof}[1][Proof]{\textbf{#1.} }{\ \rule{0.5em}{0.5em}}
\author{Kun Liu \footnote {ACCESS Linnaeus Centre and School of Electrical Engineering,
KTH Royal Institute of Technology, SE-100 44 Stockholm, Sweden, kunliu@kth.se}, Emilia Fridman\footnote {School of Electrical Engineering, Tel Aviv University, Tel Aviv,
69978 Israel, emilia@eng.tau.ac.il}, Laurentiu Hetel\footnote{University Lille Nord de France, LAGIS, FRE CNRS 3303, Ecole Centrale de Lille, Cite Scientifique, BP 48, 59651 Villeneuve d'Ascq cedex, France,
laurentiu.hetel@ec-lille.fr} }
\begin{document}

 \title{Networked control systems in the presence of scheduling protocols and communication delays
 \thanks{This work was partially supported by Israel Science Foundation (grant No 754/10), the Knut and Alice Wallenberg Foundation and the Swedish Research Council.}}

 \maketitle

\begin{abstract}
This paper develops the time-delay approach to Networked Control
Systems (NCSs) in the presence of variable transmission
delays, sampling intervals and communication constraints.  The system sensor nodes are supposed to be distributed over a network.
Due to communication constraints  only one node output is transmitted through the communication channel at once.  The scheduling of sensor information towards the controller is ruled by
a weighted Try-Once-Discard (TOD)  or by Round-Robin (RR) protocols.
Differently from the existing results on NCSs in the presence of scheduling protocols (in the frameworks of hybrid and discrete-time systems), we allow the communication delays  to be greater than the sampling
intervals. A novel  hybrid system model for the closed-loop system is presented
that contains  {\it time-varying delays in the continuous dynamics
and in the reset conditions}.  A new Lyapunov-Krasovskii method,
which is based on discontinuous in time Lyapunov functionals
is introduced for the stability analysis of the delayed
hybrid systems. Polytopic type uncertainties in the system model can be
easily included in the analysis. The efficiency of the  time-delay
approach is illustrated on the examples of uncertain cart-pendulum
and of batch reactor.
\end{abstract}

{\bf Key words} networked control systems, time-delay approach, scheduling protocols,
hybrid systems,  Lyapunov-Krasovskii method.

\noindent{\bf AMS subject classification.} 93D15, 93D05.

\section{Introduction}
Networked Control Systems (NCSs) are systems with spatially distributed
sensors, actuators and controller nodes which exchange data over a
communication data channel \cite{Antsaklis_2007}. In many NCSs, only one
node is allowed to use the communication channel at once. The
communication along the data channel is orchestrated by a scheduling
rule called protocol. The introduction of communication network media offers several
practical advantages:  reduced costs, ease of installation and
maintenance and increased flexibility.

 Three main approaches have been used to model the
sampled-data control and  later to the NCSs: a discrete-time \cite{donkers_2011, fujioka_09},  an impulsive/hybrid system \cite{Teel08, Nesic_04} and a time-delay \cite{Fridman92, richard04, GaoSIAM2007, GaoSIAM2010} approaches.
The hybrid system approach, which was inspired by  \cite{Walsh02},
has been applied to nonlinear NCSs under Try-Once-Discard (TOD) and Round-Robin (RR) protocols
in  \cite{HeemelsTAC2010, Nesic_04}.
In the framework of discrete-time approach,
 network-based stabilization of linear
time-invariant systems with TOD/RR protocols and communication delays
has been considered in  \cite{donkers_2011}. 
Variable sampling intervals and/or {\it small communication delays}
(that are {\it smaller than the sampling intervals}) have been
considered in the above works. 

Note that in the absence of scheduling protocols, all the three
approaches are applicable to non-small communication delays (see
e.g., \cite{hetel2010b, Hespanha2010}). The time-delay approach that
was recently suggested in \cite{RR_SCL}
allowed, for the first time, to treat NCSs under  RR protocol
in the presence of non-small communication delays.
In \cite{RR_SCL} the closed-loop system was presented as a switched system  with multiple
ordered delays.

In the present paper, we consider linear (probably, uncertain) NCS with additive
essentially bounded disturbances
in the presence of
scheduling protocols, variable sampling intervals and transmission delays.
Our first goal  is to extend the time-delay
approach to NCSs under TOD protocol in the presence of communication
delays that are allowed to be non-small. This leads to a novel
hybrid system model for the closed-loop system,
where {\it time-varying delays appear in the dynamics and in the
reset equations}. Since a similar hybrid system model
corresponds  to RR protocol, we derive new conditions for Input-To-State (ISS) under
RR protocol as well. These conditions are computationally simpler
than the existing ones of \cite{RR_SCL}
 though may lead to more conservative results.
%
A novel Lyapunov-Krasovskii method is introduced for hybrid delayed
systems, which is based on discontinuous in time Lyapunov
functionals. 

Polytopic type uncertainties in the system
model can be easily included in the analysis.
The efficiency and
advantages of the presented approach are illustrated by two
examples.
Some preliminary results were
presented in \cite{TOD:CDC12}.


{\bf Notation:}\ Throughout the paper, the superscript `$T$' stands
for matrix transposition, $ {\R}^n$ denotes the $n$ dimensional
Euclidean space with vector norm $|\cdot |$, $ {\R}^{n\times m}$ is
the set of all $n\times m$ real matrices, and the notation
$P\!>\!0$, for  $P\in {\R}^{n\times n}$ means that  $P$ is symmetric
and positive  definite. The symmetric elements of the symmetric
matrix will be denoted by ${*}$,
$\lambda_{min}(P)$ denotes the
smallest eigenvalue of matrix $P$. The space of functions
$\phi:[-\tau_M,0] \to  {\R}^n$, which are absolutely continuous on
$[-\tau_M,0]$, 
and have square integrable first order derivatives is denoted by
$W[-\tau_M,0]$ with the norm $\|\phi \|_W=\max_{\tet \in
[-\tau_M,0]}|\phi (\tet )|+ \left [\int_{-\tau_M}^0|\dot
\phi(s)|^2ds \right]^{\frac{1}{2}}.$ ${\Z}_+$ denotes the set \{  0,
1, 2, $\dots$ \}, whereas ${\N}$ denotes the natural numbers. The
symbol $x_t$ denotes $x_t(\theta)=x(t+\theta),$ $\theta\in[-\tau_M,0]$, whereas
$\| w[t_0,t] \|_{\infty}$
 stands for the essential
supremum of the Euclidean norm $| w[t_0,t] |$, where $w:[t_0,t]\to
{\R}^{n_w}$. $\MATI$ and $\MAD$ denote Maximum Allowable Transmission Interval
 and Maximum Allowable Delay, respectively.

\section{Problem formulation}
\subsection{The description of NCS and the hybrid  model}
Consider the system architecture in Figure 1 with plant
\begin{equation}
\label{sys}
\begin{array}{lll}
\dot x(t)=Ax(t)+B u(t)+D \omega(t),\quad t\geq 0,
\end{array}
\end{equation}
where $x(t)\in {\R}^{n}$ is the state vector, $u(t)\in {\R}^{m}$ is
the control input, $\omega(t)\in {\R}^{q}$ is
the essentially bounded disturbance. Assume that there exists a real number  $\Delta>0$
such that  $\|\omega[0, t]\|_{\infty}\leq \Delta$
 for all $t\geq 0$.
  The system matrices $A,$ $B$ and $D$  can be uncertain with polytopic type
uncertainties. 

\begin{figure}[t!]
\centering
\vspace{-3.5cm}
\includegraphics[width=9cm]{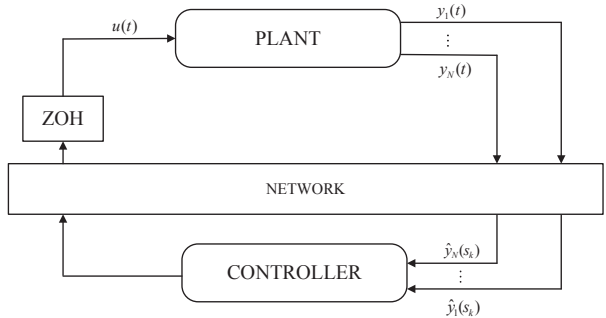}
\vspace{-4.5cm} \caption{\footnotesize System architecture with $N$
sensors}
\end{figure}

The system has several nodes ($N$ distributed sensors, a controller node
and an actuator node) which are connected via  the network.
The measurements are
given by $y_i(t)=C_i x(t) \in  {\R}^{n_i}, i =1, \dots, N, \ \sum_{i=1}^N
n_i=n_y$ and we denote  $C=\left[
\begin{array}{ccccc} C_1^T& \cdots & C_N^T
                          \end{array} \right]^T $, $y(t) =\left[
\begin{array}{ccccc} y_1^T(t)& \cdots & y_N^T(t)
                          \end{array} \right]^T \in  {\R}^{n_y}$.
Let $s_k$ denote the unbounded monotonously increasing sequence
of sampling instants
$$ \label{sequence}
0=s_0< s_1<\cdots< s_k< \cdots, \  \ k \in {\Z_+},\
\lim_{k\rightarrow \infty} s_k=\infty.
$$
%

%
At each sampling instant $s_k$, one of the outputs $y_i(s_k) \in
{\R}^{n_i}$ is transmitted via the sensor network. We
suppose that data loss is not possible and that the transmission of
the information over the network  is subject to a variable
delay $\eta_k$. 
Then $t_k=s_k+\eta_k$ is the updating time instant of the Zero-Order
Hold (ZOH).

Differently from \cite{donkers_2011, HeemelsTAC2010}, we
do not
restrict the network delays to be small 
with $t_k=s_k+\eta_k<s_{k+1}$, i.e. $\eta_{k}<s_{k+1}-s_k$. As in
\cite{Hespanha2010} we allow the delay to be non-small
provided that the old sample cannot get to the destination  (to the
controller  or to the actuator) after the most recent one.
%
%
%
%
Assume that the network-induced delay $\eta_k$ and the time span
between the updating and the most recent sampling instants are
bounded:

\begin{equation}\begin{array}{rr}
\label{tauM} t_{k+1}-t_k+\!\eta_k \leq  \tau_{M} , \ 0\leq
\eta_m\leq \eta_k \leq \MAD, \ k \in {\Z_+},
\end{array}
\end{equation}
%
where $\tau_M$ denotes the maximum time span between the time
\begin{equation}
\label{sktk}s_k=t_k-\eta_k\end{equation}
at which the state is sampled and the time
$t_{k+1}$ at which the next update arrives at the destination. Here
$\eta_m$ and $\MAD$ are known bounds and
$\tau_M= \MATI + \MAD$. 
Since $\MATI=\tau_M-\MAD \leq \tau_M-\eta_m$, $\eta_m > {\tau_M
\over 2}$ implies that the network delays are non-small due to
$\eta_k \geq \eta_m > \tau_M -\eta_m$. In the examples of Section
\ref{sec:examples}, we will show that  our method is applicable for
$\eta_m > {\tau_M \over 2}$.

Denote by $\hat y(s_k)=\left[
\begin{array}{ccccc} \hat{y}_1^T(s_k)& \cdots & \hat{y}_N^T(s_k)
                          \end{array} \right]^T \in  {\R}^{n_y}$ the output information
                          submitted to the scheduling
protocol.
%
At each sampling instant $s_k$, one of the system nodes $i \in
\left\lbrace 1, \dots, N\right\rbrace $ is active, that is only one of
$\hat{y}_i(s_k)$ values is updated with the recent output
${y}_i(s_k)$. Let $i^*_k \in \left\lbrace  1, \dots, N \right\rbrace$ denote the active output node at the sampling instant $s_k$, which will be chosen due to scheduling protocols.

Consider the error between the system output $y(s_k)$ and the last
available information $\hat{y}(s_{k-1})$:
\begin{equation}
\begin{array}{llll}
\label{e:N} e(t)=\col\{e_1(t), \cdots, e_N(t)\}\equiv \hat{y}(s_{k-1})-y(s_k),
 \\ \quad t\in [t_k, t_{k+1}), \ k\in {\Z_+}, \ \hat{y}(s_{-1})\bydef 0, \ e(t)\in {\R}^{n_y}.
 \end{array} \end{equation}
%


We suppose that the controller and the actuator are event-driven (in the sense that the controller and
the ZOH update their outputs as soon as they receive a new sample).
The most recent  output information at the
controller level is denoted by
$\hat y(s_k)$.

\subsubsection*{Static output feedback control}
Assume that there exists a matrix $K=\left[
\begin{array}{ccc} K_1 & \cdots & K_N\end{array} \right]$, $ K_i \in  {\R}^{m \times n_i}$ such that  $A+BKC$ is Hurwitz.
Then, the static output feedback controller has a form
\begin{equation}
\label{control:quan:TOD:N}\begin{array}{lll}
 u(t)= K_{i_k^*} y_{i_k^*}{(t_k-\eta_k)}\!+\!\sum_{i=1, i \neq i_k^*}^N K_{i} \hat
y_i{(t_{k-1}\!-\!\eta_{k-1}}), \ t\in[t_k,t_{k+1}), 
\end{array}\end{equation}
where $i_k^*$ is the index of the active node at $s_k$ and $\eta_k$ is communication delay.
We obtain thus the impulsive closed-loop model with the following
continuous dynamics:
\begin{equation}
\begin{array}{lll} \label{x:N}
\dot x(t)=Ax(t)+ A_1x(t_k-\eta_k)+ \sum_{i=1, i \neq i_k^*}^N  B_{i} e_{i}(t)+D \omega(t),  \\
\dot e(t)=0, \quad t\in [t_k,t_{k+1}),
\end{array}
\end{equation}
where $A_1=BKC, \ B_{i}=BK_{i}, \ i=1, \dots, N.$ 
%

Taking into account (\ref{e:N}), we obtain
$$
\begin{array}{lll}
e_i(t_{k+1})&=&\hat y_i(s_k)-y_i(s_{k+1})=y_i(s_k)-y_i(s_{k+1})\\
&=&C_ix(s_k)-C_ix(s_{k+1}), \ i={i_k^*},
\end{array}
$$
and
$$
\begin{array}{lll}
e_{i}(t_{k+1})&=&\hat y_{i}(s_k)-y_{i}(s_{k+1})=\hat y_{i}(s_{k-1})-y_{i}(s_{k+1})\\
&=&\hat y_{i}(s_{k-1})-y_{i}(s_k)+y_{i}(s_k)-y_{i}(s_{k+1})\\
&=&e_{i}(t_{k})+C_{i}[x(s_k)-x(s_{k+1})], \ i \neq i_k^*,
i\in \N. 
\end{array}
$$
Thus, the delayed  reset system is given by
\begin{equation}
\begin{array}{lll} \label{reset:N}
x(t_{k+1})=x(t_{k+1}^-),\\
e_i(t_{k+1})=C_i[x(t_k-\eta_k)\!-\!x(t_{k+1}-\eta_{k+1})], \ i={i_k^*},\\
e_{i}(t_{k+1})\!=\!e_{i}(t_{k})\!+\!C_{i}[x(t_k-\eta_k)\!-\! x(t_{k+1}-\eta_{k+1})],\ i \!\neq\! i_k^*,
i\in \N.
\end{array}
\end{equation}

Therefore, (\ref{x:N})-(\ref{reset:N}) is the hybrid model of the NCS.
Since  $x(t_k-\eta_k)=x(t-\tau(t))$  for $t\in [t_k, t_{k+1})$ with
$\tau(t)=t-t_k+\eta_k \in [\eta_m, \tau_M]$ (cf. (\ref{tauM})), the
hybrid system model (\ref{x:N})-(\ref{reset:N}) contains the
piecewise-continuous delay $\tau(t)$ in the continuous-time dynamics
(\ref{x:N}). Even for $\eta_k=0$ we  have the delayed
state $x(t_k)=x(t-\tau(t))$ with $\tau(t)=t-t_k$.

Note that the first updating time $t_0$ corresponds to the
time instant when the first data is received by the actuator.
 Assume that initial
conditions for (\ref{x:N})-(\ref{reset:N}) are given by
$x_{t_0}\in W[-\tau_M,0]$ and $e(t_0)=-Cx(t_0-\eta_0)=-Cx_0$.


\subsubsection*{Dynamic output feedback control}
Assuming that the controller is directly connected to the
actuator, 
consider a dynamic output feedback controller of the form
$$ 
\label{controller:D:N}
\begin{array}{lll}
\dot x_c(t)=A_cx_c(t)+B_c \hat y(s_k),\\
u(t)=C_cx_c(t)+D_c \hat y(s_k), \ \ t\in[t_k, t_{k+1}), \ k\in {\Z_+},
\end{array}
$$
where $x_c(t)\in  {\R}^{n_c}$, 
$A_c, B_c, C_c$ and $D_c$ are the matrices with appropriate
dimensions. Let $e_i(t) (i=1, \dots, N)$ be defined by (\ref{e:N}).
The closed-loop system 
can be presented in the form of (\ref{x:N})-(\ref{reset:N}), where
$x$, $e_i$ and the
matrices are changed by the ones with the bars as follows:
$$
\begin{array}{lll}
\bar x=\left[\begin{array}{lc} x \\ x_c\end{array}\right],\
\bar A=\left[\begin{array}{lc}A &  BC_c\\
0_{n_c \times n} &  A_c\end{array}\right],  \bar B_i=\left[\begin{array}{c}BD_c  \\
B_c \end{array}\right],  \bar D=\left[\begin{array}{c}D  \\
0_{n_c \times q} \end{array}\right],\\
\bar A_1=\scriptsize \left[\begin{array}{cl}BD_cC  & 0_{n \times n_c}\\
B_c C & 0_{n_c \times n_c}\end{array}\right],
\bar C_1=\left[
\begin{array}{cc} C_1& 0\\
0 & 0
                          \end{array} \right], \bar C_i\in  {\R}^{\!n_y\! \times \!(n+n_c)},  \\
\bar C_2=\left[
\begin{array}{lcc} 0_{n \times n_1} & C_2^T& 0\\
0_{n_c \times n_1} & 0 & 0
                          \end{array} \right]^T, \dots ,
\bar C_N=\left[
\begin{array}{cc}  0 & C_N^T\\
 0 & 0
                          \end{array} \right]^T, \\
 %
%
\bar e_1(t)\!=\![e_1^T(t) \ 0]^T, \bar e_2(t)\!=\![0_{1 \times n_1 } \ e_2^T(t) \ \ 0]^T,  \cdots \
\bar e_N(t)\!=\![0 \ \ e_N^T(t)]^T,  \bar e_i(t)\in  {\R}^{n_y}.
\end{array}
$$

\begin{rk}\label{model:compare}
In \cite{HeemelsTAC2010}, a  
piecewise-continuous error $e(t)=\hat{y}(t_k)-y(t), t\in [s_k,
s_{k+1}]$ is defined, which leads to the non-delayed continuous
dynamics.  The derivation of reset equations is based on
the assumption of small communication delays, that is avoided in our
approach.
In our  approach $e(t)$ is different: it is given by (\ref{e:N}) and is
piecewise-constant. As a result, our hybrid model is different with the  delayed continuous dynamics.
Moreover, in the absence of scheduling protocols, the closed-loop system is  given by non-hybrid
system
 (\ref{x:N}), where $e(t)\equiv 0$. The latter is consistent
 with the time-delay model considered e.g. in \cite{GaoSIAM2007, GaoSIAM2010}.
\end{rk}

\subsection{Scheduling protocols}
\subsubsection*{TOD protocol}\label{sec_TODpr}
In TOD protocol, the output node $i \in \left\lbrace 1, \dots, N\right\rbrace $ with
the greatest (weighted) error will be granted the access to the
network.

\begin{defi}[Weighted TOD protocol]
Let $Q_i>0 (i=1, \dots, N)$  be some weighting matrices. At the sampling instant $s_k$, the \textit{weighted
TOD protocol} is a protocol for which the active output node with the index $i_k^*$
 is defined as any index  that satisfies
\begin{equation}
\begin{array}{lll} \label{if:N}
|\sqrt{Q}_{i_k^*}e_{i_k^*}(t)|^2\geq |\sqrt{Q}_{i}e_{i}(t)|^2, \
t\in [t_k, t_{k+1}), \ k\in {\Z_+}, \ i=1, \dots, N.
\end{array}
\end{equation}
\end{defi}
A possible choice of $i_k^*$ is given by
$$
 i^*_k = \min \{\arg \max_{i \in \left\lbrace  1, \dots, N \right\rbrace}
  |\sqrt{Q}_i \left(\hat{y}_i(s_{k-1})-y_i(s_k) \right) |^2\}.
$$
The conditions for computing the weighting matrices $Q_1,\dots,Q_N$
will be given in Theorem \ref{thm:ISS:TOD:N} below.

\begin{rk}
 For implementation of
 TOD protocol in wireless networks we refer to \cite{ChrisTODrealization}.
\end{rk}

\subsubsection*{RR protocol}\label{sec:RR:protocol}
The  active output node is chosen periodically:
\begin{equation}
\begin{array}{lll} \label{RR:ik:N}
i_k^*=i_{k+N}^*, \ for \ all \ k \in {\Z_+},\\
i_j^* \neq i_l^*, \ for \ 0\leq j<l\leq N-1.
\end{array}
\end{equation}
\begin{rk}\label{model:RR:N}
Note that another model for the closed-loop system under RR protocol
was given in \cite{RR_SCL}.
The model in \cite{RR_SCL} is a switched system with ordered delays
$\tau_1(t)<\cdots<\tau_N(t)$, where $\tau_i(t)=t-t_{k-i+1}+\eta_{k-i+1}, i=1, \dots, N$.
A Lyapunov-Krasovskii analysis of the latter model is based on the standard time-independent Lyapunov functional
for interval  delay.
\end{rk}

\begin{defi}
The hybrid system (\ref{x:N})-(\ref{reset:N}) with essentially
bounded disturbance $\omega$ is said to be
partially ISS with respect to $x$ (or $x$-ISS) if there exist
constants $b
>0, \delta>0$ and $c>0$ such that the following holds for $t\geq
t_0$
$$\begin{array}{ll}
\label{ISS} |x(t)|^2\leq be^{-\delta (t-t_0)}\left[ 
\|x_{t_0}\|_W^2+|e(t_0)|^2\right] +c\|\omega[t_0, t]\|_{\infty}^2
\end{array}
$$
for the solutions of the hybrid system initialized with
$x_{t_0}=\phi\in { W[-\tau_M,0]}$
 and $e(t_0)\in {\R}^{n_y}$. The hybrid system
(\ref{x:N})-(\ref{reset:N}) is ISS if additionally the following bound is valid for $t\geq t_0$
$$|e(t)|^2\leq be^{-\delta (t-t_0)}\left[ 
\|x_{t_0}\|_W^2+|e(t_0)|^2\right] +c\|\omega[t_0, t]\|_{\infty}^2.$$
\end{defi}

Our objective  is to derive Linear Matrix Inequality (LMI) conditions for the partial ISS of
the hybrid system (\ref{x:N})-(\ref{reset:N}) with respect to the
variable of interest $x$. In \cite{NesicAut07Partial}, the notion of
partial stability was also used. In Section \ref{ISS:TOD:N} below,
 ISS of (\ref{x:N})-(\ref{reset:N}) under TOD protocol with $N$ sensor nodes
 will be studied. For $N=2$,
less restrictive conditions will be derived in Section
\ref{ISS:TOD:RR:N:2}, and it will be shown that the same conditions
guarantee $x$-ISS of (\ref{x:N})-(\ref{reset:N}) under RR protocol.
In Section \ref{ISS:RR:N}, the latter conditions  will be extended
to RR protocol with  $N\geq 2$.

\section{ISS under TOD protocol: general $N$}\label{ISS:TOD:N} 

Note that the  differential equation for $x$ given by
(\ref{x:N}) depends  on $e_i(t)=e_i(t_k),\ t\in[t_k,t_{k+1})$ with $i\neq i_k^*$ only.
 Consider the following Lyapunov functional:
\begin{equation}\label{V:TOD:N}
\begin{array}{lll}
V_e(t)=V(t, x_t, \dot x_t)+ \sum_{i=1}^Ne_i^T(t)Q_ie_i(t),\\
V(t,x_t, \dot x_t)= \tilde V(t, x_t, \dot x_t)+ V_G,\\
V_G=\sum_{i=1}^N(\tau_M-\!\!\eta_m)\int_{s_k}^{t}e^{2\alpha(s-t)}|\sqrt{G_i}C_i \dot x(s)|^2ds,\\
\tilde V(t,x_t, \dot x_t)=x^T(t)P x(t)+\int_{t-\eta_m}^{t}
e^{2\alpha(s-t)} x^T(s) S_0 x(s) ds \\
\quad +\int_{t-\tau_M}^{t-\eta_m}
e^{2\alpha(s-t)}x^T(s) S_1 x(s) ds\\
\quad + \eta_m \int_{- \eta_m}^{0}\int_{t+\tet }^t
 e^{2\alpha(s-t)}\dot x^T(s) R_0\dot x(s) ds d\tet\\
\quad +  (\tau_M-\eta_m)\int_{- \tau_M}^{-\eta_m}\int_{t+\!\tet }^t
 e^{2\alpha(s-t)}\dot x^T(s) R_1\dot x(s) ds d\tet,\\
 P>0, S_j>0,  R_j>0, G_i>0, Q_i>0, \alpha >0,\\ j=0,1, i=1,\dots,N,
\ t\in [t_k, t_{k+1}), \ k\in {\Z_+}, \end{array}\end{equation}
where $x_t(\theta)\bydef x(t+\theta), \ \theta\in[-\tau_M,0]$
and where we define (for simplicity) $x(t)=x_0, \ \ t<0.$

Here  the terms
$$e_i^T(t)Q_ie_i(t)\equiv
e_i^T(t_k)Q_ie_i(t_k), \ t\in [t_k,t_{k+1})$$ are
piecewise-constant, $\tilde V(t,x_t, \dot x_t)$ presents the
standard Lyapunov functional for systems
 with  interval delays $\tau(t)\in [\eta_m, \tau_M]$.
{\it The novel piecewise-continuous in time term} $V_G$
 {\it is inserted to cope with the delays in the reset
conditions}. It is continuous on $[t_k,t_{k+1})$ and
 do not grow in the jumps (when $t=t_{k+1}$), since
\begin{equation}\label{Q:N}
\begin{array}{lll}
{V_G}_{|t=t_{k+1}}-{V_G}_{|t=t_{k+1}^-}\\
=\sum_{i=1}^N(\tau_M-\eta_m) \int_{s_{k+1}}^{t_{k+1}} e^{2 \alpha (s-t_{k+1})} |\sqrt {G_i}C_i\dot x (s)|^2ds\\
\ \ \ -\sum_{i=1}^N(\tau_M-\eta_m)\int_{s_{k}}^{t_{k+1}^-}
e^{2 \alpha(s-t_{k+1})} |{\sqrt {G_i}C_i}\dot x(s)|^2 ds\\
\leq -\sum_{i=1}^N(\tau_M-\eta_m)e^{-2\alpha \tau_M}
\int_{s_{k}}^{s_{k+1}}|\sqrt {G_i}C_i\dot
x(s)|^2ds\\
\leq - \sum_{i=1}^Ne^{-2\alpha \tau_M} |\sqrt
{G_i}C_i[x({s_{k}})-x({s_{k+1}})]|^2,\end{array}\end{equation}
where we  applied Jensen's inequality (see e.g., \cite{Gu03}). The function $V_e(t)$ is thus continuous and
differentiable over $[t_k,t_{k+1})$.
 The following lemma gives sufficient conditions for
the $x$-ISS of (\ref{x:N})-(\ref{if:N}):

\begin{lem} \label{lemma:TOD:N} 
 Let there exist  positive constants $\alpha,$ $b,$ $0<Q_i\in {\R}^{n_i\times n_i}$, $0<U_i\in {\R}^{n_i\times n_i}$, $0<G_i\in {\R}^{n_i\times n_i},$ $i=1,\dots, N,$ and $V_e(t)$ of  (\ref{V:TOD:N}) such that
 along (\ref{x:N}) the following inequality holds
\begin{equation}
\begin{array}{ll}\label{inequality1:lemma:TOD:N}
\dot {V_e}(t)+2\alpha V_e(t)-{1\over \tau_M-\eta_m} \sum_{i=1, i \neq
i_k^*}^N|\sqrt{U_i}e_i(t)|^2\\
\ \ -2\alpha|\sqrt{Q_{i_k^*}}e_{i_k^*}(t)|^2- b |\omega(t)|^2\leq 0,\ t\in[t_k, t_{k+1}).\\
\end{array}\end{equation}
Assume additionally that
\begin{equation}
\begin{array}{rr}\label{Q:Q12:N}\Omega_i \!\bydef \!
\begin{bmatrix}
-{1-2\alpha (\tau_M-\eta_m)\over N-1}Q_i+U_i & Q_i \\
 {*} & Q_i  - G_i{e^{-2\alpha \tau_M}}\end{bmatrix}< 0 ,\  i=1,\dots, N.
\end{array}\end{equation}
Then $V_e(t)$ does not grow in the jumps along
(\ref{x:N})-(\ref{if:N}):
\begin{equation}
\begin{array}{llll}\label{jump:tk1:N}
\Theta \bydef V_e(t_{k+1})-V_e(t_{k+1}^-) +\sum_{i=1, i \neq
i_k^*}^N|\sqrt{U_i}e_i(t_k)|^2\\
\hspace{.7cm}+2\alpha(\tau_M-\eta_m)|\sqrt{Q_{i_k^*}}e_{i_k^*}(t_k)|^2\leq 0.
\end{array}\end{equation}
Moreover, the following bounds hold for a solution of
(\ref{x:N})-(\ref{if:N}) initialized by  $x_{t_0}\in W[-\tau_M,0],
e(t_0)\in {\R}^{n_y}$:
\begin{equation}\label{lem:V:bound:N}
\begin{array}{llll}V(t, x_t, \dot x_t)\leq e^{-2\alpha (t-t_0)} V_e(t_0)
+{b \over  2\alpha} \Delta^2, \ t\geq t_0, \\
 V_e(t_0)=V(t_0,
x_{t_0}, \dot
x_{t_0})+\sum_{i=1}^N|\sqrt{Q_i}e_i(t_0)|^2,
\end{array}\end{equation}
and
\begin{equation}\label{lem:V:bound:eN}
\begin{array}{llll}
\sum_{i=1}^N|\sqrt{Q_i}e_i(t)|^2\leq \tilde c e^{-2\alpha (t-t_0)}
V_e(t_0) +{b \over  2\alpha}  \Delta^2,\\
\end{array}\end{equation}
 where $\tilde c=e^{2\alpha(\tau_M-\eta_m)}$, implying ISS of
(\ref{x:N})-(\ref{if:N}).
\end{lem}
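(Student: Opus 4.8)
The plan is to establish the three conclusions of the lemma in sequence, since each builds on the previous. First I would verify the jump inequality \eqref{jump:tk1:N}. The quantity $V_e(t_{k+1})-V_e(t_{k+1}^-)$ splits into contributions from the piecewise-constant error terms $\sum_i e_i^T Q_i e_i$, from $V_G$, and from the continuous part $\tilde V$. The continuous-in-time functional $\tilde V$ does not jump (the state $x$ is continuous by \eqref{reset:N}), so that contribution vanishes. For $V_G$ I would use the estimate \eqref{Q:N}, which via Jensen's inequality bounds its jump above by $-\sum_i e^{-2\alpha\tau_M}|\sqrt{G_i}C_i[x(s_k)-x(s_{k+1})]|^2$. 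For the error part I would substitute the reset map \eqref{reset:N}, writing $e_i(t_{k+1})$ in terms of $e_i(t_k)$ and the increment $C_i[x(t_k-\eta_k)-x(t_{k+1}-\eta_{k+1})]=C_i[x(s_k)-x(s_{k+1})]$, separating the active index $i_k^*$ from the rest.

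The key algebraic step is then to assemble $\Theta$ as a quadratic form in the vector $\col\{e_i(t_k),\,C_i[x(s_k)-x(s_{k+1})]\}$ for each node $i$ and to recognize that the matrix $\Omega_i$ of \eqref{Q:Q12:N} is exactly its coefficient block. Here the TOD switching rule \eqref{if:N} is essential: for every passive node $i\neq i_k^*$ we have $|\sqrt{Q}_i e_i(t_k)|^2\leq|\sqrt{Q}_{i_k^*}e_{i_k^*}(t_k)|^2$, and summing over the $N-1$ passive nodes lets me absorb the $2\alpha(\tau_M-\eta_m)|\sqrt{Q_{i_k^*}}e_{i_k^*}(t_k)|^2$ term at rate $\tfrac{1-2\alpha(\tau_M-\eta_m)}{N-1}$ per node, which is precisely the $(1,1)$-entry of $\Omega_i$. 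The active node contributes only through $V_G$ and through its reset increment, which are handled by the same $\Omega_i$ block. With $\Omega_i<0$ each per-node form is nonpositive, giving $\Theta\leq0$. The main obstacle I anticipate is this bookkeeping: the active and passive nodes enter $\Theta$ asymmetrically, and one must carefully match the weighting-protocol bound to the block structure so that a single LMI $\Omega_i<0$ covers both cases uniformly.

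Finally I would derive the exponential bounds \eqref{lem:V:bound:N}--\eqref{lem:V:bound:eN}. On each interval $[t_k,t_{k+1})$ the functional $V_e$ is continuous and differentiable, so the flow condition \eqref{inequality1:lemma:TOD:N} gives $\dot V_e+2\alpha V_e\leq b|\omega(t)|^2+(\text{nonnegative error terms})$; at the jump instants $V_e$ does not increase by \eqref{jump:tk1:N}, and the very same error terms subtracted in \eqref{inequality1:lemma:TOD:N} are reinjected as the nonnegative additions in $\Theta$. The plan is to integrate the flow inequality over each interval with the integrating factor $e^{2\alpha t}$ and then telescope across the jumps, the matching sign of the error terms ensuring the two accountings cancel so that $V_e(t)\leq e^{-2\alpha(t-t_0)}V_e(t_0)+\tfrac{b}{2\alpha}\Delta^2$ holds across all intervals. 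Since $V(t,x_t,\dot x_t)\leq V_e(t)$ by nonnegativity of the dropped error terms, \eqref{lem:V:bound:N} follows; and since $\sum_i|\sqrt{Q_i}e_i(t)|^2$ is piecewise-constant and bounded by $\tilde c$ times its value at the preceding sampling instant with $\tilde c=e^{2\alpha(\tau_M-\eta_m)}$, the bound \eqref{lem:V:bound:eN} follows, yielding ISS.
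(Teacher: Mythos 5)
Your treatment of the jump condition \eqref{jump:tk1:N} is exactly the paper's argument: the same split of $\Theta$ into the $\tilde V$, $V_G$ and error contributions, the same quadratic form $\zeta_i^T\Omega_i\zeta_i$ in $\zeta_i=\col\{e_i(t_k),\,C_i[x(s_k)-x(s_{k+1})]\}$ for the passive nodes, the TOD bound distributed over the $N-1$ passive nodes to produce the $(1,1)$-entry of $\Omega_i$, and the active node controlled by the $(2,2)$-block $Q_{i_k^*}-G_{i_k^*}e^{-2\alpha\tau_M}<0$. That part is sound.

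The gap is in your final paragraph. The claim that telescoping yields $V_e(t)\leq e^{-2\alpha(t-t_0)}V_e(t_0)+\tfrac{b}{2\alpha}\Delta^2$ \emph{at all times} is unjustified and in fact false: the nonnegative error terms accumulated on $[t_k,t)$ through \eqref{inequality1:lemma:TOD:N} are only canceled at the \emph{next} jump $t_{k+1}$, so at interior times one only gets $V_e(t)\le e^{-2\alpha(t-t_k)}V_e(t_k)+\sum_{i\ne i_k^*}|\sqrt{U_i}e_i(t_k)|^2+2\alpha(\tau_M-\eta_m)|\sqrt{Q_{i_k^*}}e_{i_k^*}(t_k)|^2+\tfrac{b}{2\alpha}\Delta^2$ (the paper's \eqref{lem1:N}); the clean bound for $V_e$ holds only at the instants $t_k$. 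A concrete counterexample to your claim: take $\omega\equiv 0$, $x_{t_0}\equiv 0$ and $e(t_0)\neq 0$; then $V_e$ is constant on $[t_0,t_1)$ because $V\ge 0$ and the error portion of $V_e$ is piecewise-constant, contradicting strict exponential decay. Indeed, if your claim were true, $\sum_i|\sqrt{Q_i}e_i(t)|^2\le V_e(t)$ would give \eqref{lem:V:bound:eN} with $\tilde c=1$; the factor $\tilde c=e^{2\alpha(\tau_M-\eta_m)}>1$ in the statement is precisely the price paid for interior times. The missing step (the paper's \eqref{lem1a:N}) is to note that \eqref{Q:Q12:N} forces $2\alpha(\tau_M-\eta_m)<1$ and $U_i\le\tfrac{1-2\alpha(\tau_M-\eta_m)}{N-1}Q_i\le Q_i$, so the reinjected error terms are dominated by $\sum_i|\sqrt{Q_i}e_i(t_k)|^2$, which is exactly the (constant-on-the-interval) error portion of $V_e(t)$; subtracting it from both sides of \eqref{lem1:N} gives the clean interior bound on $V$ (not $V_e$), and combined with the telescoped bound on $V_e(t_k)$ this yields \eqref{lem:V:bound:N}. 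Finally, \eqref{lem:V:bound:eN} is not obtained by scaling the error "by $\tilde c$ relative to the preceding sampling instant" --- on $[t_k,t_{k+1})$ the errors equal their value at $t_k$ exactly; $\tilde c$ enters through $e^{-2\alpha(t_k-t_0)}=e^{-2\alpha(t-t_0)}e^{2\alpha(t-t_k)}\le\tilde c\,e^{-2\alpha(t-t_0)}$ when converting the bound on $V_e(t_k)$ into one decaying in $t$.
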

\begin{proof}
Since $\int_{t_k}^{t}e^{-2\alpha(t-s)}ds \leq \tau_M-\eta_m,$
$t\in[t_k, t_{k+1})$ and
 $|\omega(t)| \leq  \Delta$, by the comparison principle,
(\ref{inequality1:lemma:TOD:N}) implies
\begin{equation}\begin{array}{llll}\label{lem1:N}
V_e(t)\leq e^{-2\alpha(t-t_k)}V_e(t_k)+\sum_{i=1, i \neq i_k^*}^N\{|\sqrt{U_i}e_i(t_k)|^2\}\\
\hspace{1.1cm}+2\alpha (\tau_M-\eta_m) |\sqrt{Q_{i_k^*}}e_{i_k^*}(t_k)|^2+ b\Delta^2\int_{t_k}^te^{-2\alpha (t-s)}ds,
\ t\in[t_k, t_{k+1}).
\end{array}\end{equation}
Note that (\ref{Q:Q12:N}) yields $0<2\alpha (\tau_M-\eta_m)<1$ and $U_i \leq {1-2\alpha (\tau_M-\eta_m)\over N-1}Q_i \leq Q_i, \ i=1,\dots, N$. Hence,
\begin{equation}\begin{array}{llll}\label{lem1a:N}
V(t, x_t, {\dot x}_t)\leq e^{-2\alpha (t-t_k)}V_e(t_k)+b \Delta^2\int_{t_k}^te^{-2\alpha (t-s)}ds,
\ t\in[t_k, t_{k+1}).
\end{array}\end{equation}
Since $\tilde V_{|t=t_{k+1}}=\tilde V_{|t=t_{k+1}^-}$ and
$e(t_{k+1}^-)=e(t_k)$,
 we obtain
$$
\begin{array}{llllll}
\Theta=\sum_{i=1}^N[|\sqrt{Q_i}e_i(t_{k+1})|^2-|\sqrt{Q_i}e_i(t_{k})|^2]
+\sum_{i=1, i \neq i_k^*}^N|\sqrt{U_i}e_i(t_k)|^2 \\ \hspace{.7cm}+2\alpha(\tau_M-\eta_m)\sqrt{Q_{i_k^*}}e_{i_k^*}(t_k)|^2+{V_G}_{|t=t_{k+1}}-{V_G}_{|t=t_{k+1}^-}.
\end{array}$$
Then taking into account (\ref{Q:N}) we find
$$\begin{array}{llllll}
\Theta\leq |\!{\sqrt Q_{i^*_k}}e_{i^*_k}(t_{k+1})|^2\!\!+\!\!\sum_{i=1, i\neq i_k^*}^N|\!\sqrt{Q_i}e_i(t_{k+1})|^2\!\!-\![1\!-\!2\alpha (\tau_M\!\!-\!\!\eta_m)]|\!\sqrt{Q_{i^*_k}}e_{i^*_k}(t_{k})|^2\\
\hspace{.5cm}-\sum_{i=1, i\neq i^*_k}^N\{e_i^T(t_{k})[Q_i\!-\!U_i]e_i(t_{k})\}\!-\!\sum_{i=1}^Ne^{-2\alpha \tau_M} |\sqrt
{G_i}C_i[x(s_{k})\!-\!x(s_{k+1}
)]|^2.\\
\end{array}$$
Note that under TOD protocol
$$\begin{array}{llllll}
-|\sqrt{Q_{i^*_k}}e_{i^*_k}(t_{k})|^2\leq -{1\over N-1}\sum_{i=1,i\neq i_k^*}^N|\sqrt{Q_i}e_i(t_{k})|^2.
\end{array}$$
Denote $\zeta_i=\col\{e_i(t_k), C_i[x({s_{k}})-x({s_{k+1}
})]\}.$ Then, employing (\ref{reset:N}) and \eqref{sktk} we arrive at
$$\begin{array}{llllll}
\Theta \leq -|\sqrt {G_{i_k^*}e^{-2\alpha \tau_M}\!-\!Q_{i_k^*}}C_{i_k^*}[x({s_{k}})-x({s_{k+1} })]|^2 +\sum_{i=1, i \neq i_k^*}^N\zeta_i^T\Omega_i \zeta_i \leq 0,
\end{array}$$
that yields (\ref{jump:tk1:N}).

The inequalities (\ref{jump:tk1:N}) and (\ref{lem1:N}) with $t=t_{k+1}^-$ imply
$$\begin{array}{llll}
V_e(t_{k+1}) \leq  e^{-2\alpha (t_{k+1}-t_k)}V_e(t_k)+b\Delta^2\int_{t_k}^{t_{k+1}}e^{-2\alpha (t_{k+1}-s)}ds.
\end{array}$$ 
Then

\begin{equation}
\label{Vk2:N}
\begin{array}{llll}
V_e(t_{k+1}) \leq
 e^{-2\alpha
(t_{k+1}-t_{k-1})}V_e(t_{k-1}) + b\Delta^2 \int_{t_{k-1}}^{t_{k+1}}e^{-2\alpha (t_{k+1}-s)}ds\\
\hspace{1.4cm} \leq  e^{-2\alpha
(t_{k+1}-t_{0})}V_e(t_{0})+ b\Delta^2 \int_{t_0}^{t_{k+1}}e^{-2\alpha (t_{k+1}-s)}ds.
\end{array}
\end{equation}
Replacing in \eqref{Vk2:N} $k+1$  by $k$ and using
(\ref{lem1a:N}), we arrive at (\ref{lem:V:bound:N}), which yields
$x$-ISS of (\ref{x:N})-(\ref{if:N}) because
$$\lambda_{min}(P)|x(t)|^2 \leq V(t, x_t, \dot x_t), \  V(t_0,
x_{t_0}, \dot x_{t_0}) \leq \delta \|x_{t_0}\|_W^2$$ for some scalar
$\delta>0$.
Moreover, \eqref{Vk2:N} with $k+1$ replaced by $k$ implies
(\ref{lem:V:bound:eN}) since for $t\in[t_k, t_{k+1})$
$$e^{-2\alpha (t_k-t_0)}=e^{-2\alpha
(t-t_0)}e^{-2\alpha (t_k-t)}\leq \tilde c e^{-2\alpha (t-t_0)}.$$
\end{proof}


By using Lemma \ref{lemma:TOD:N}
and the standard arguments for the delay-dependent analysis, we
derive LMI conditions for ISS of (\ref{x:N})-(\ref{if:N}) (see
Appendix A for the proof):
\begin{thm}

\label{thm:ISS:TOD:N}   Given $ 0 \leq \eta_m < \tau_M,$ $\alpha >0$,
assume that there exist positive scalar $b$, $n\times n$ matrices
$P>0$, $S_0>0$, $R_0>0$, $S_1>0$, $R_1>0$, $S_{12}$, $n_i\times n_i$ matrices $Q_i>0$,
$U_i>0,$ $G_i>0,$ $i=1,\dots, N,$ such that (\ref{Q:Q12:N}) and the following inequalities are
feasible:
\begin{equation}
\label{LMI1:thm:ISS:TOD:N}
\Phi=\left[\begin{array} {ccc}
R_1 & S_{12}\\
{*} & R_1
\end{array} \right] \geq 0,
\end{equation}
\begin{equation}
\begin{array}{llll}
\label{LMI2:thm:ISS:TOD:N}
 \begin{bmatrix}
 \Sigma_i- (F^i)^T\Phi   F^ie^{-2\alpha \tau_M}&\ \Xi_i^T H \\
 {*} &\ -H\end{bmatrix} <0, \ i=1,\dots, N,
\end{array}
\end{equation}
where
\begin{equation}\begin{array}{llll}
\label{thm1:phi:ISS:TOD:N}
H=\eta_m^2 R_0+(\tau_M-\eta_m)^2R_1+(\tau_M-\eta_m)\sum_{l=1}^N C_l^T G_lC_l,\\
 \Sigma_i= (F_1^i)^TP \Xi_i+ (\Xi^i)^TP F_1^i+ \Upsilon_i- (F_2^i)^TR_0 F_2^ie^{-2\alpha \eta_m},\\
 F_1^i=[I_{n} \ 0_{n \times (3n+n_y-n_i+q)}], \\
 F_2^i=[I_{n} \ -I_{n} \ \ 0_{n \times (2n+n_y-n_i+q)}],\\
 F^i=\left[\begin{array} {ccccc}
0_{n \times n} & I_{n} & -I_{n} & 0_{n \times n} &0_{n \times (n_y-n_i+q)}\\
0_{n \times n} & 0_{n \times n} & I_{n} & -I_{n } &0_{n \times (n_y-n_i+q)}
\end{array} \right],\\
\Xi_i=[A \ 0_{n \times n} \ A_1 \ 0_{n \times n} \ B_2 \ \cdots B_N \ D],\ i=1,\\
\Xi_i=[A \ 0_{n \times n} \ A_1 \ 0_{n \times n} \ B_1 \ \cdots \ {B_j}_{|j\neq i} \cdots B_N \ D],\ i=2,\dots N,\\
\Upsilon_i\!=\!{\rm diag}\{S_0\!\!+\!\!2 \alpha P, -(S_0\!-\!S_1)e^{-2\alpha \eta_m},  0_{n \times n},
 -S_1e^{-2\alpha \tau_M}, \psi_{2}, \!\cdots, \psi_N, \!-bI_q \},\ i=1,\\
 \Upsilon_i\!=\!\!{\rm diag}\{S_0\!\!+\!\!2 \alpha P, -(S_0\!\!-\!S_1)e^{-2\alpha \eta_m},  0_{n \times n},
 -\!S_1e^{-2\alpha \tau_M},\psi_{1}, \!\!\cdots,\!\! {\psi_j}_{|j\neq i}, \!\!\cdots,\!\! \psi_N, \!-bI_q \},\\ i=2,\dots N,\ \psi_j=- {1 \over \tau_M-\eta_m}U_j+2 \alpha Q_j, \ j=1, \dots, N.
\end{array}\end{equation}
%
Then solutions of the hybrid system (\ref{x:N})-(\ref{if:N}) satisfy
the bound (\ref{lem:V:bound:N}),
where $V(t, x_t, \dot x_t)$ is given by (\ref{V:TOD:N}), implying
ISS of (\ref{x:N})-(\ref{if:N}).  If the above LMIs are feasible
with $\alpha=0$, then the bound (\ref{lem:V:bound:N}) holds with a
small enough $\alpha_0>0$.
\end{thm}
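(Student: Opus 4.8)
The plan is to lean entirely on Lemma \ref{lemma:TOD:N}, which already reduces $x$-ISS (in fact ISS) to verifying the single differential inequality (\ref{inequality1:lemma:TOD:N}) along the continuous dynamics (\ref{x:N}), together with the jump condition (\ref{Q:Q12:N}). Since (\ref{Q:Q12:N}) is assumed directly in the theorem, the whole task is to show that the LMIs (\ref{LMI1:thm:ISS:TOD:N})--(\ref{LMI2:thm:ISS:TOD:N}) imply (\ref{inequality1:lemma:TOD:N}) for each admissible value $i_k^*=i$ of the active node; the index-dependence of (\ref{x:N}) through the sum over $i\neq i_k^*$ is exactly what forces the $N$ separate inequalities in (\ref{LMI2:thm:ISS:TOD:N}).

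First I would differentiate $V_e$ along (\ref{x:N}) on $[t_k,t_{k+1})$. The piecewise-constant terms $e_i^TQ_ie_i$ contribute nothing to $\dot V_e$, so $\dot V_e=\dot{\tilde V}+\dot V_G$, and the factor $2\alpha V_e$ supplies the cancellations built into the $e^{2\alpha(s-t)}$ kernels. Routine differentiation of $\tilde V$ yields the boundary quadratics $x^T(S_0+2\alpha P)x$, $-e^{-2\alpha\eta_m}x^T(t-\eta_m)(S_0-S_1)x(t-\eta_m)$, $-e^{-2\alpha\tau_M}x^T(t-\tau_M)S_1x(t-\tau_M)$, the cross term $2x^TP\dot x$, the leading quadratic $\dot x^TH\dot x$ assembled from the $R_0$, $R_1$ and $V_G$ terms (producing exactly the matrix $H$ of (\ref{thm1:phi:ISS:TOD:N})), and two negative single integrals of $\dot x^TR_0\dot x$ and $\dot x^TR_1\dot x$.

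The core estimates dispose of those negative integrals. I would bound the exponential kernel below by $e^{-2\alpha\eta_m}$ (resp. $e^{-2\alpha\tau_M}$) and apply Jensen's inequality: the $R_0$ integral gives $-e^{-2\alpha\eta_m}[x(t)-x(t-\eta_m)]^TR_0[\cdot]=-e^{-2\alpha\eta_m}\xi^T(F_2^i)^TR_0F_2^i\xi$, while the $R_1$ integral, split at the intermediate point $t-\tau(t)$, produces two reciprocally weighted terms $\tfrac{1}{\beta}v_1^TR_1v_1+\tfrac{1}{1-\beta}v_2^TR_1v_2$ with $\beta=\tfrac{\tau(t)-\eta_m}{\tau_M-\eta_m}$, $v_1=x(t-\eta_m)-x(t-\tau(t))$, $v_2=x(t-\tau(t))-x(t-\tau_M)$. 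Here I invoke the reciprocally convex combination lemma (Park): under (\ref{LMI1:thm:ISS:TOD:N}), this lower-bounds the two terms by $\col\{v_1,v_2\}^T\Phi\,\col\{v_1,v_2\}=\xi^T(F^i)^T\Phi F^i\xi$, which is precisely the $-e^{-2\alpha\tau_M}(F^i)^T\Phi F^i$ contribution. The pleasant point is that after this step $\tau(t)$ disappears, so no separate vertex argument in $\tau(t)$ is needed.

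Finally I would assemble the augmented vector $\xi=\col\{x(t),x(t-\eta_m),x(t-\tau(t)),x(t-\tau_M),\{e_j\}_{j\neq i},\omega\}$; note the active error $e_i$ is absent, because the $2\alpha Q_ie_i^Te_i$ term coming from $2\alpha V_e$ is exactly cancelled by the $-2\alpha|\sqrt{Q_i}e_i|^2$ subtraction in (\ref{inequality1:lemma:TOD:N}), whereas for $j\neq i$ the corresponding diagonal entry becomes $\psi_j=-\tfrac{1}{\tau_M-\eta_m}U_j+2\alpha Q_j$. Substituting $\dot x=\Xi_i\xi$ turns the left side of (\ref{inequality1:lemma:TOD:N}) into $\xi^T[\Sigma_i-e^{-2\alpha\tau_M}(F^i)^T\Phi F^i+\Xi_i^TH\Xi_i]\xi$, and a Schur complement on the $\dot x^TH\dot x$ term (legitimate since $H>0$) converts negativity of this form into exactly (\ref{LMI2:thm:ISS:TOD:N}). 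The $\alpha=0$ addendum then follows by continuity: strict feasibility at $\alpha=0$ persists for all sufficiently small $\alpha_0>0$. I expect the main obstacle to be the reciprocally convex step -- matching the weights $\tfrac{1}{\beta},\tfrac{1}{1-\beta}$ to the block layout of $\Phi$ and checking that $F^i$ is ordered consistently with $\xi$ -- while the exponential bookkeeping (which kernel bound pairs with which integral) is the secondary place where constant and sign errors most easily intrude.
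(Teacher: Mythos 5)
Your proposal is correct and follows essentially the same route as the paper's proof in Appendix A: reduce to verifying (\ref{inequality1:lemma:TOD:N}) via Lemma \ref{lemma:TOD:N}, differentiate $V_e$, bound the $R_0$ integral by Jensen and the split $R_1$ integral by the reciprocally convex (Park) lemma under (\ref{LMI1:thm:ISS:TOD:N}), assemble the augmented vector $\xi_i$ omitting the active error $e_{i_k^*}$, and close with a Schur complement to obtain (\ref{LMI2:thm:ISS:TOD:N}). Even the details you flag — the cancellation of $2\alpha|\sqrt{Q_{i_k^*}}e_{i_k^*}|^2$, the appearance of $\psi_j$ on the diagonal, and the disappearance of $\tau(t)$ after the reciprocally convex step — match the paper's argument exactly.
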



\section{ISS under TOD/RR protocol: $N=2$}\label{ISS:TOD:RR:N:2}
For $N=2$ less restrictive conditions than those of Theorem
\ref{thm:ISS:TOD:N}
 for the $x$-ISS of (\ref{x:N})-(\ref{reset:N}) will be derived via a different
 from (\ref{V:TOD:N}) Lyapunov functional:
\begin{equation}\label{V:TOD}
\begin{array}{lll}
V_e(t)=V(t, x_t, \dot x_t)+{t_{k+1}-t\over \tau_M-\eta_m}\{e_i^T(t)Q_ie_i(t)\}_{|i\neq i^*_k},\\
Q_1>0, \ Q_2>0, \ \alpha >0, \ t\in [t_k, t_{k+1}), \ k\in {\Z_+},
\end{array}\end{equation} where $i^*_k \in \{1,2 \}$ and $ V(t,
x_t, \dot x_t)$ is given by (\ref{V:TOD:N}) with
$G_i=Q_ie^{2\alpha \tau_M}.$
The term ${t_{k+1}-t\over \tau_M-\eta_m}\{e_i^T(t_k)Q_ie_i(t_k)\}$
is inspired by the similar construction of Lyapunov functionals for
the sampled-data systems \cite{samp, Teel08, Seuret_12}. The following statement holds:

\begin{lem} \label{lemma:TOD} Given $N=2$, 
if there exist  positive constants $\alpha,$ $b$ and $V_e(t)$ of  (\ref{V:TOD}) such that
 along (\ref{x:N})-(\ref{if:N}) ((\ref{x:N}), (\ref{reset:N}), (\ref{RR:ik:N})) the following inequality holds
\begin{equation}
\begin{array}{llll}\label{inequality1:lemma:TOD}
\dot {V_e}(t)+2\alpha V_e(t)
 -b |\omega(t)|^2\leq  0
,\quad t\in[t_k, t_{k+1}).
\end{array}\end{equation}
%
Then $V_e(t)$ does not grow in the jumps along (\ref{x:N})-(\ref{if:N}) ((\ref{x:N}), (\ref{reset:N}), (\ref{RR:ik:N})), where
\begin{equation}
\begin{array}{llll}\label{jump:tk1}
\Theta \bydef V_e(t_{k+1})-V_e(t_{k+1}^-) 
\leq 0.
\end{array}\end{equation}
The bound (\ref{lem:V:bound:N}) is valid for a
solution of (\ref{x:N})-(\ref{if:N}) ((\ref{x:N}), (\ref{reset:N}), (\ref{RR:ik:N})) with the initial
condition $x_{t_0}\in W[-\tau_M,0]$, $e(t_0)\in {\R}^{n_y}$,
implying the $x$-ISS of  (\ref{x:N})-(\ref{if:N}) ((\ref{x:N}), (\ref{reset:N}), (\ref{RR:ik:N})).

\end{lem}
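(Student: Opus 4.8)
The plan is to mirror the structure of the proof of Lemma~\ref{lemma:TOD:N}: integrate the continuous-time inequality (\ref{inequality1:lemma:TOD}) over an inter-update interval by the comparison principle, establish non-growth of $V_e$ across the jump at $t=t_{k+1}$, and telescope to obtain (\ref{lem:V:bound:N}). What is genuinely new here is the jump analysis, because the time-weighted error term in (\ref{V:TOD}) behaves differently from the constant term in (\ref{V:TOD:N}), and the choice $G_i=Q_ie^{2\alpha\tau_M}$ is tailored to make this work.

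First I would treat the continuous part. Since $|\omega(t)|\le\Delta$, applying the comparison principle to (\ref{inequality1:lemma:TOD}) yields, for $t\in[t_k,t_{k+1})$,
$$
V_e(t)\le e^{-2\alpha(t-t_k)}V_e(t_k)+b\Delta^2\int_{t_k}^{t}e^{-2\alpha(t-s)}ds.
$$
Because the added term in (\ref{V:TOD}) is nonnegative, $V(t,x_t,\dot x_t)\le V_e(t)$, so this also bounds $V$. The jump estimate rests on three observations: (i) $\tilde V$ is continuous at $t_{k+1}$ since $x$ is continuous; (ii) with $G_i=Q_ie^{2\alpha\tau_M}$, inequality (\ref{Q:N}) gives ${V_G}_{|t_{k+1}}-{V_G}_{|t_{k+1}^-}\le-\sum_{i=1}^2|\sqrt{Q_i}C_i[x(s_k)-x(s_{k+1})]|^2$; and (iii) the weight $\frac{t_{k+1}-t}{\tau_M-\eta_m}$ vanishes as $t\to t_{k+1}^-$, so the error contribution drops out of $V_e(t_{k+1}^-)$ entirely. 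Denoting by $j'\ (\neq i^*_{k+1})$ the node left idle on $[t_{k+1},t_{k+2})$, these give
$$
\Theta=\big({V_G}_{|t_{k+1}}-{V_G}_{|t_{k+1}^-}\big)+\frac{t_{k+2}-t_{k+1}}{\tau_M-\eta_m}\,|\sqrt{Q_{j'}}e_{j'}(t_{k+1})|^2.
$$
By (\ref{tauM}), $t_{k+2}-t_{k+1}\le\tau_M-\eta_m$, so the weight is at most $1$, and it remains to dominate $|\sqrt{Q_{j'}}e_{j'}(t_{k+1})|^2$ by one of the two negative terms supplied by $V_G$.

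The main obstacle, and the step that separates TOD from RR, is exactly this domination, which I would resolve by a short case analysis on $j'$ using the reset equations (\ref{reset:N}). If $j'$ was the active node on $[t_k,t_{k+1})$, then $e_{j'}(t_{k+1})=C_{j'}[x(s_k)-x(s_{k+1})]$, and the term is absorbed directly by $-|\sqrt{Q_{j'}}C_{j'}[x(s_k)-x(s_{k+1})]|^2$. If instead $j'$ was already idle on $[t_k,t_{k+1})$, then for $N=2$ the node active on $[t_{k+1},t_{k+2})$ must be $i^*_{k+1}=i^*_k$, whence $e_{i^*_k}(t_{k+1})=C_{i^*_k}[x(s_k)-x(s_{k+1})]$; the TOD rule (\ref{if:N}) at $t_{k+1}$ then gives $|\sqrt{Q_{j'}}e_{j'}(t_{k+1})|^2\le|\sqrt{Q_{i^*_k}}e_{i^*_k}(t_{k+1})|^2=|\sqrt{Q_{i^*_k}}C_{i^*_k}[x(s_k)-x(s_{k+1})]|^2$, again absorbed by the matching $V_G$ term. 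Under RR only the first case arises, since the active node alternates. In both protocols $\Theta\le0$, i.e. (\ref{jump:tk1}).

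Finally, combining $V_e(t_{k+1})\le V_e(t_{k+1}^-)$ with the continuous bound evaluated at $t_{k+1}^-$ yields
$$
V_e(t_{k+1})\le e^{-2\alpha(t_{k+1}-t_k)}V_e(t_k)+b\Delta^2\int_{t_k}^{t_{k+1}}e^{-2\alpha(t_{k+1}-s)}ds;
$$
iterating this recursion back to $t_0$, using $\int_{t_0}^{t}e^{-2\alpha(t-s)}ds\le\frac{1}{2\alpha}$ and $V\le V_e$, produces the bound (\ref{lem:V:bound:N}). Since $\lambda_{min}(P)|x(t)|^2\le V(t,x_t,\dot x_t)$ and $V(t_0,x_{t_0},\dot x_{t_0})\le\delta\|x_{t_0}\|_W^2$ for some $\delta>0$, the $x$-ISS estimate follows.
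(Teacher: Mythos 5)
Your proposal is correct and takes essentially the same approach as the paper's proof: the comparison-principle bound on each interval, the jump analysis combining the vanishing weight of $V_Q$ at $t_{k+1}^-$, the bound $t_{k+2}-t_{k+1}\le\tau_M-\eta_m$, the $V_G$ decrement \eqref{Q:N} with $G_i=Q_ie^{2\alpha\tau_M}$, and the reset equations \eqref{reset:N}, followed by the telescoping argument of Lemma \ref{lemma:TOD:N}. Your case split on the idle node $j'$ is exactly the paper's dichotomy $i^*_{k+1}\neq i^*_k$ (covering RR and one TOD case, where the reset equation absorbs the term directly) versus $i^*_{k+1}=i^*_k$ (the TOD case requiring \eqref{if:N}), so the two arguments coincide.
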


\begin{proof}
Since $|\omega(t)| \leq  \Delta$, (\ref{inequality1:lemma:TOD}) implies
\begin{equation}\begin{array}{llll}\label{lem1}
V_e(t)\leq e^{-2\alpha(t-t_k)}V_e(t_k)+b \Delta^2\int_{t_k}^te^{-2\alpha (t-s)}ds,
\ t\in[t_k, t_{k+1}).
\end{array}\end{equation}
 Noting that
$$\begin{array}{ll}
V_e(t_{k+1})\leq \tilde V_{|t=t_{k+1}}+|\sqrt{Q}_ie_i(t_{k+1})|_{|i\neq i^*_{k+1}}^2\\
\hspace{1.7cm}+\sum_{i=1}^2(\tau_M-\eta_m)
\int_{t_{k+1}-\eta_{k+1}}^{t_{k+1}}e^{2\alpha(s-t_{k+1})} |{\sqrt {G_i}C_i}\dot x(s)|^2ds,
\end{array}$$
 we obtain employing \eqref{Q:N}
$$
\begin{array}{llllll}
\Theta \leq e_i^T(t_{k+1})Q_ie_i(t_{k+1})_{|i\neq i^*_{k+1}}\!+\!{V_G}_{|t=t_{k+1}}\!-\!{V_G}_{|t=t_{k+1}^-}\\
\quad \leq e_i^T(t_{k+1})Q_ie_i(t_{k+1})_{|i\neq i^*_{k+1}}-\sum_{i=1}^2 |{\sqrt {Q_i}C_i}[x({t_{k}-\eta_{k}})-x({t_{k+1} -\eta_{k+1}})]|^2.
\end{array}$$
We will prove that $\Theta\leq 0$ under TOD and RR protocols,
respectively. Under TOD protocol we have
$$
\begin{array}{llllll}
e_i^T(t_{k+1})Q_ie_i(t_{k+1})_{|i\neq i^*_{k+1}} \leq e_{i_k^*}^T(t_{k+1})Q_{i_k^*}e_{i_k^*}(t_{k+1})
\end{array}$$
for $i^*_{k+1}=i^*_{k}$, whereas
\begin{equation}\label{lem:RR}
\begin{array}{llllll}
e_i^T(t_{k+1})Q_ie_i(t_{k+1})_{|i\neq i^*_{k+1}}=
e_{i_k^*}^T(t_{k+1})Q_{i_k^*}e_{i_k^*}(t_{k+1})
\end{array}\end{equation}
for $i^*_{k+1}\neq i^*_{k}$. Then, taking into account
(\ref{reset:N}) we obtain
$$
\begin{array}{llllll}
\Theta \leq |{\sqrt {Q_i}C_i}[x({t_{k}-\eta_{k}})-x({t_{k+1} -\eta_{k+1}})]|^2_{|i=i_k^*}\\
\quad  -\sum_{i=1}^2 |{\sqrt
{Q_i}C_i}[x({t_{k}-\eta_{k}})-x({t_{k+1} -\eta_{k+1}})]|^2 \leq
0.\end{array}$$
Under RR protocol we have $i^*_{k+1}\neq i^*_{k}$ meaning that
(\ref{lem:RR}) holds and that  $\Theta\leq 0$. Then the result follows by the arguments of
 Lemma \ref{lemma:TOD:N}. \end{proof}

\begin{rk}
Differently from Lemma \ref{lemma:TOD:N},  Lemma \ref{lemma:TOD}
guarantees \eqref{Vk2:N} that does not give a bound on $e_{i_k^*}(t_k)$
since $V_e(t)$ for $t\in[t_k, t_{k+1})$ does not depend on
$e_{i_k^*}(t_k)$.
  That is why  Lemma \ref{lemma:TOD} guarantees only $x$-ISS.
  However, as explained in Remark \ref{rem_ebound} below, under RR protocol $x$-ISS implies boundedness of $e$.
\end{rk}

In the next section, we will extend the result of Lemma \ref{lemma:TOD} under RR protocol to the case of $N\geq 2$.
Theorem  \ref{thm:ISS:RR:N} below (in the particular case of $N=2$) will provide
LMIs for the
$x$-ISS of (\ref{x:N})-(\ref{if:N}) ((\ref{x:N}), (\ref{reset:N}), (\ref{RR:ik:N})).

%
%

\section{ISS under RR protocol: $N\geq 2$}\label{ISS:RR:N}


Under RR protocol (\ref{RR:ik:N}), the reset system (\ref{reset:N}) can be rewritten as
\begin{equation}
\begin{array}{llll} \label{RRreset:N}
x(t_{k+1})=x(t_{k+1}^-),\\
e_{i_{k-j}^*}(t_{k+1})=C_{i_{k-j}^*}[x(s_{k-j})-x(s_{k+1})],\\
j=0,\dots,N-1 \ {\rm if} \  k\geq  N-1,
\end{array}
\end{equation}
where the index $k-j$ corresponds to the last updated measurement in the node $i_{k-j}^*$.

Consider the following Lyapunov functional:
\begin{equation}\label{V:RR}
\begin{array}{lll}
V_e(t)=V(t, x_t, \dot x_t)+V_Q,\quad t\geq t_{N-1},\\
V(t, x_t, \dot x_t)=\tilde V(t, x_t, \dot x_t)+V_G,\\
\end{array}\end{equation}
where $\tilde V(t, x_t, \dot x_t)$ is given by (\ref{V:TOD:N}). The
discontinuous in time terms $V_Q$ and $V_G$ are defined as follows:
\begin{equation}\begin{array}{lllll}
\label{initial:VG}
V_Q=
\sum_{ j=1}^{N-1}{t_{k+1}-t\over
j(\tau_M-\eta_m)}|{\sqrt{Q_{i_{k-j}^*}}}e_{i_{k-j}^*}(t)|^2, \ k\geq N-1, \ t\in [t_k, t_{k+1}),\\
V_G=\left\{ {\begin{array}{ll}
\sum_{i=1}^N(\tau_M-\eta_m)\int_{s_k}^{t}e^{2\alpha(s-t)}|\sqrt{G_i}C_i \dot x(s)|^2ds, \ k\geq N, \ t\in [t_k, t_{k+1}),\\
\sum_{i=1}^N(\tau_M-\eta_m)\int_{s_0}^{t}e^{2\alpha(s-t)}|\sqrt{G_i}C_i\dot x(s)|^2ds,\ t\in [{t_{N-1}}, t_{N}),
\end{array}} \right.
\end{array}
\end{equation}
where for $i=1,\dots, N$ \begin{equation}
\begin{array}{llll}\label{G:RR}
G_i=(N-1)Q_ie^{2\alpha [\tau_M+(N-2)(\tau_M-\eta_m)]}>0.
\end{array}\end{equation}
Here $V_e$ does not depend on $e_{i_{k}^*}(t_k)$.
Note that given $i=1,\dots,N$, $e_i$-term appears  $N-1$ times in $V_Q$
 for every $N$ intervals $[t_{k+j}, t_{k+j+1}), \ j=0,\dots, N-1$
 {(except of the interval with $i_{k+j}^*=i$).} This motivates $N-1$ in (\ref{G:RR})
 because $V_G$ is supposed to compensate $V_Q$ term.

As in the previous sections, the term $V_G$ is inserted to
cope with the delays in the reset conditions. It is continuous on
$[t_k,t_{k+1})$ and
 does not grow in the jumps (when $t=t_{k+1}$), since for  $k>
 N-1$ (cf. \eqref{Q:N})
\begin{equation}\label{Gi:RR:N}
\begin{array}{lll}
{V_G}_{|t=t_{k+1}}\!\!-\!{V_G}_{|t=t_{k+1}^-}
\leq -\sum_{i=1}^N(\tau_M-\eta_m)\int_{s_{k}}^{s_{k+1}}e^{2 \alpha
(s-t_{k+1})} |\sqrt {G_i}C_i\dot x(s)|^2 ds
\end{array}\end{equation}
and for $k=N-1$
\begin{equation}\label{Gi0:RR:N}
\begin{array}{lll}
{V_G}_{|t=t_{N}}-{V_G}_{|t=t_{N}^-}
\leq -\sum_{i=1}^N(\tau_M-\eta_m)\int_{s_0}^{s_{N}}e^{2 \alpha (s-t_{N})}
|{\sqrt{G_i}}C_i\dot x(s)|^2ds.
\end{array}\end{equation}
%
The term $V_Q$ grows in the jumps as follows:
$$
\begin{array}{lll}
{V_Q}_{|t=t_{k+1}}-{V_Q}_{|t=t_{k+1}^-}
&=&\sum_{ j=1}^{N-1}{t_{k+2}-t_{k+1}\over
j(\tau_M-\eta_m)}|{\sqrt{Q_{i_{k+1-j}^*}}} e_{i_{k+1-j}^*}(t_{k+1})|^2\\
&\leq&  \sum_{ j=0}^{N-2}{1\over j+1}|{\sqrt{Q_{i_{k-j}^*}}C_{i_{k-j}^*}}[x(s_{k-j})-x(s_{k+1})]|^2 \\
&\leq& \sum_{ j=0}^{N-2}(\tau_M-\eta_m)\int_{s_{k-j}}^{s_{k+1}} |\sqrt
{Q_{i_{k-j}^*}}C_{i_{k-j}^*}\dot x (s)|^2ds,\end{array}$$ where we
have used Jensen's inequality and the bound
\begin{equation}\begin{array}{lll}\label{skj1}
s_{k+1}-s_{k-j}&=&s_{k+1}-s_k+s_k-\dots +s_{k-j+1}-s_{k-j}\\
&\leq& (j+1)(\tau_M-\eta_m).\end{array}\end{equation}
 Since
$1\leq e^{2\alpha [\tau_M+(N-2)(\tau_M-\eta_m)]}e^{2 \alpha
(s_{k-j}-t_{k+1})}$ for  $j=0, \dots, N-2,$ we obtain
\begin{equation}\label{Q:RR:N}
\begin{array}{lll}
{V_Q}_{|t=t_{k+1}}-{V_Q}_{|t=t_{k+1}^-}
\leq \sum_{ j=0}^{N-2}(\tau_M-\eta_m) e^{2\alpha [\tau_M+(N-2)(\tau_M-\eta_m)]}\\
\hspace{3.8cm}\times\int_{s_{k-j}}^{s_{k+1}}e^{2 \alpha (s-t_{k+1})}
|{\sqrt{Q_{i_{k-j}^*}}} C_{i_{k-j}^*}\dot x (s)|^2 ds.
\end{array}\end{equation}
%

The following lemma gives sufficient conditions for
the $x$-ISS of (\ref{x:N}), (\ref{RR:ik:N}), (\ref{RRreset:N}) (see Appendix B for proof):

\begin{lem} \label{lemma:RR:N}
If there exist  positive constants $\alpha,$ $b$ and $V_e(t)$ of  (\ref{V:RR}) such that along (\ref{x:N}), (\ref{RR:ik:N}), (\ref{RRreset:N}) the inequality (\ref{inequality1:lemma:TOD}) is satisfied for
$k \geq N-1$.
Then the following bound holds along the solutions of (\ref{x:N}), (\ref{RR:ik:N}), (\ref{RRreset:N}):
\begin{equation}
\begin{array}{llll}\label{jump:tkRR:N}
 V_e(t_{k+1}) \leq e^{-2\alpha(t_{k+1}-t_{N-1})}V_e(t_{N-1})+\Psi_{k+1}\\
\hspace{1.7cm}+b \Delta^2\int_{t_{N-1}}^{t_{k+1}}e^{-2\alpha (t_{k+1}-s)}ds,\ k\geq N-1,
\end{array}\end{equation}
where
\begin{equation}\label{Psik:RR:N}
\begin{array}{llll}
\Psi_{k+1}=-(\tau_M-\eta_m)e^{2\alpha [\tau_M+(N-2)(\tau_M-\eta_m)]}\\
\hspace{1.4cm} \times \Big[\sum_{j=0}^{N-3}(N-2-j)
\int_{s_{k-j-1}}^{s_{k+1}}e^{2 \alpha (s-t_{k+1})}|\sqrt{Q_{i_{k-j}^*}}C_{i_{k-j}^*}\dot x(s)|^2ds\\
\hspace{1.8cm}+(N-1)\!\int_{s_k}^{s_{k+1}}\!e^{2 \alpha (s-t_{k+1})}
 |\sqrt{Q_{i_{k+1}^*}}C_{i_{k+1}^*}\dot x(s)|^2ds\Big]\!\leq 0.\end{array}\end{equation}
Moreover, for all $t\geq t_{N-1}$
\begin{equation}\label{lem:V:bound:RR:N}
\begin{array}{llll}
V(t, x_t, \dot x_t)\leq e^{-2\alpha (t-t_{N-1})} V_e(t_{N-1})
+{b \over  2\alpha}  \Delta^2, 
 \\
V_e(t_{N-1})=V(t_{N-1}, x_{t_{N-1}}, \dot
x_{t_{N-1}})+\sum_{i=1}^N\!|\sqrt{Q_i}e_i(t_{N-1})|^2. 
\end{array}\end{equation}
The latter inequality guarantees the $x$-ISS of  (\ref{x:N}), (\ref{RR:ik:N}), (\ref{RRreset:N}) for $t\geq t_{N-1}$.
\end{lem}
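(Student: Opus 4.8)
The plan is to reproduce, in the $N$-node RR setting, the two-stage scheme of Lemma~\ref{lemma:TOD:N} and Lemma~\ref{lemma:TOD}: bound the growth of $V_e$ on each open interval from the continuous-dynamics hypothesis, then control the increment across each update, and finally iterate from $t_{N-1}$. First I would note that on each $[t_k,t_{k+1})$ the functional $V_e$ of \eqref{V:RR} is continuous and differentiable (the pieces $V_Q$ and $V_G$ are smooth there), so \eqref{inequality1:lemma:TOD} together with $|\omega(t)|\le\Delta$ and the comparison principle gives, exactly as in \eqref{lem1},
\[
V_e(t)\le e^{-2\alpha(t-t_k)}V_e(t_k)+b\Delta^2\int_{t_k}^{t}e^{-2\alpha(t-s)}ds,\qquad t\in[t_k,t_{k+1}),\ k\ge N-1,
\]
whence a bound on $V_e(t_{k+1}^-)$ by letting $t\to t_{k+1}^-$.

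The core is the increment at $t=t_{k+1}$. Because $\tilde V$ is continuous across the update, $V_e(t_{k+1})-V_e(t_{k+1}^-)$ splits into the $V_G$-increment, already estimated in \eqref{Gi:RR:N} (and \eqref{Gi0:RR:N} for the first update $k=N-1$), and the $V_Q$-increment, estimated in \eqref{Q:RR:N}. Inserting the RR reset \eqref{RRreset:N}, namely $e_{i_{k-j}^*}(t_{k+1})=C_{i_{k-j}^*}[x(s_{k-j})-x(s_{k+1})]$, together with Jensen's inequality and \eqref{skj1}, both increments become weighted sums of the integrals $\int_{s_p}^{s_{p+1}}e^{2\alpha(s-t_{k+1})}|\sqrt{Q_i}C_i\dot x(s)|^2ds$. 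Substituting the design choice \eqref{G:RR}, $G_i=(N-1)Q_ie^{2\alpha[\tau_M+(N-2)(\tau_M-\eta_m)]}$, rescales the (negative) $V_G$-increment so that it carries, for \emph{every} node, the coefficient $-(N-1)$ on the single interval $[s_k,s_{k+1}]$, while the (positive) $V_Q$-increment spreads the same scaling over the intervals $[s_{k-j},s_{k+1}]$, $j=0,\dots,N-2$.

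The main obstacle is the bookkeeping once the one-step recursion is iterated from $t_{N-1}$ to $t_{k+1}$. The plan is to observe that the external discount $e^{-2\alpha(t_{k+1}-t_{m+1})}$ multiplies each internal weight $e^{2\alpha(s-t_{m+1})}$ into $e^{2\alpha(s-t_{k+1})}$, erasing all dependence on the update index $m$, and then to count, for a fixed sampling interval $[s_p,s_{p+1}]$, how many times each node's integral is collected in the accumulated $V_Q$-increments. Using the RR periodicity \eqref{RR:ik:N} (any $N$ consecutive steps activate each node once), away from the endpoints the node $i_p^*$ is collected $N-1$ times, the nodes $i_{p-1}^*,\dots,i_{p-N+2}^*$ are collected $N-2,\dots,1$ times, and the node $i_{p+1}^*$ is collected $0$ times; against this the accumulated $V_G$-increments supply the coefficient $-(N-1)$ to every node on $[s_p,s_{p+1}]$. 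Hence $i_p^*$ is exactly cancelled and every other node leaves a non-positive residue, so the accumulated increment is $\le0$; near $t_{k+1}$ the counts are truncated, and the deficits that do not cancel are precisely the terms assembled into $\Psi_{k+1}\le0$ of \eqref{Psik:RR:N}. Since all remaining residues are non-positive, retaining only $\Psi_{k+1}$ and discarding the rest preserves the inequality and yields \eqref{jump:tkRR:N}.

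Finally, since $\Psi_{k+1}\le0$ it may be dropped from \eqref{jump:tkRR:N}; replacing $k+1$ by $k$, substituting the within-interval bound of the first paragraph for $t\in[t_k,t_{k+1})$, merging the two integrals into $\int_{t_{N-1}}^{t}e^{-2\alpha(t-s)}ds\le\frac{1}{2\alpha}$, and using $V(t,x_t,\dot x_t)\le V_e(t)$ (as $V_Q\ge0$) gives \eqref{lem:V:bound:RR:N}. The $x$-ISS assertion then follows as in Lemma~\ref{lemma:TOD:N} from $\lambda_{min}(P)|x(t)|^2\le V(t,x_t,\dot x_t)$, after bounding $V_e(t_{N-1})$ by the initial data through the finitely many startup steps $t_0,\dots,t_{N-1}$.
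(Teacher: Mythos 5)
Your proposal is correct, and it relies on the same machinery as the paper: the comparison principle on each $[t_k,t_{k+1})$, the jump estimates \eqref{Gi:RR:N}, \eqref{Gi0:RR:N} and \eqref{Q:RR:N}, the reset relations \eqref{RRreset:N} with Jensen's inequality and \eqref{skj1}, and the scaling \eqref{G:RR} of $G_i$. Where you genuinely depart from the paper is in the final accumulation step: the paper proves \eqref{jump:tkRR:N} by induction on $k$, establishing the one-step inequality $\Theta_{k+1}+e^{-2\alpha(t_{k+1}-t_k)}\Psi_k\le\Psi_{k+1}$ through an explicit re-indexing of the discounted $\Psi_k$, whereas you unroll the recursion and count directly, for each sampling interval $[s_p,s_{p+1}]$ and each node, how many $V_Q$-collections accrue against the single $-(N-1)$ contribution from $V_G$. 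Your count is right (node $i_p^*$ is collected exactly $N-1$ times, so it cancels; the others are collected fewer times and leave non-positive residues; truncation near $s_{k+1}$ produces the terms kept in $\Psi_{k+1}$), and it makes the combinatorial reason for the cancellation more transparent than the paper's induction does; what the induction buys in exchange is that the truncation bookkeeping is absorbed into the inductive hypothesis instead of having to be enumerated, which makes rigor cheaper. One wording caveat: the uncancelled deficits are not \emph{precisely} the terms of $\Psi_{k+1}$ but are componentwise more negative --- e.g.\ for $N=3$ the node $i_{k-1}^*$ carries residue $-1$ on $[s_k,s_{k+1}]$ while $\Psi_{k+1}$ assigns it $0$, and $i_k^*$ carries $-2$ on $[s_{k-1},s_k]$ against $-1$ in $\Psi_{k+1}$ --- so the correct statement is that every actual coefficient is bounded above by the corresponding (non-positive) coefficient in $\Psi_{k+1}$; your closing remark about discarding the non-positive remainder is exactly the repair needed, so this is a matter of phrasing rather than a gap.
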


By using Lemma \ref{lemma:RR:N}, arguments
 of Theorem \ref{thm:ISS:TOD:N} and the fact that for $j=1,\dots N-1$
 $$\begin{array}{ll}{d\over dt} {t_{k+1}-t\over
j(\tau_M-\eta_m)}=-{1\over j(\tau_M-\eta_m)}
\leq -{1\over (N-1)(\tau_M-\eta_m)},\end{array}$$
  we arrive at the the following result:
\begin{thm}

\label{thm:ISS:RR:N}   Given $0 \leq \eta_m < \tau_M$ and
$\alpha >0$, assume that there exist positive scalar $b$, $n\times n$ matrices
$P>0$, $S_0>0$, $R_0>0$, $S_1>0$, $R_1>0$, $S_{12}$ and $n_i\times n_i$ matrices $Q_i>0$
$(i=1,\dots, N)$ such that
(\ref{LMI1:thm:ISS:TOD:N}) and
(\ref{LMI2:thm:ISS:TOD:N}) are feasible with
 $U_i={Q_i\over N-1}$, where $G_i$ is given by  \eqref{G:RR}.
Then for $N>2$ solutions of the hybrid system (\ref{x:N}), (\ref{RR:ik:N}), (\ref{RRreset:N}) satisfy
the bound (\ref{lem:V:bound:RR:N})
with $V(t, x_t, \dot x_t)$ given by (\ref{V:RR}),  meaning $x$-ISS for $t \geq t_{N-1}$.
For $N=2$ solutions of the hybrid system (\ref{x:N})-(\ref{if:N}) ((\ref{x:N}), (\ref{reset:N}), (\ref{RR:ik:N})) satisfy
the bound (\ref{lem:V:bound:N}) meaning $x$-ISS (for $t\geq t_0$).
Moreover, if the above LMIs are feasible with
$\alpha=0$, then the solution bounds  hold with a small enough $\alpha_0>0$.
\end{thm}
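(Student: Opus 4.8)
The plan is to reduce the claim to Lemma \ref{lemma:RR:N} (for $N>2$) and Lemma \ref{lemma:TOD} (for $N=2$), so that the only thing left to establish is the continuous-dynamics inequality (\ref{inequality1:lemma:TOD}), namely $\dot{V_e}(t)+2\alpha V_e(t)-b|\omega(t)|^2\leq 0$ on each interval $[t_k,t_{k+1})$ with $k\geq N-1$, for the functional $V_e$ of (\ref{V:RR}). Once (\ref{inequality1:lemma:TOD}) holds, Lemma \ref{lemma:RR:N} delivers the bound (\ref{lem:V:bound:RR:N}) and hence $x$-ISS for $t\geq t_{N-1}$, while for $N=2$ the functional (\ref{V:RR}) collapses to (\ref{V:TOD}) with $G_i=Q_ie^{2\alpha\tau_M}$ (the $N=2$ instance of \eqref{G:RR}), and Lemma \ref{lemma:TOD} gives (\ref{lem:V:bound:N}) and $x$-ISS for $t\geq t_0$.

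To verify (\ref{inequality1:lemma:TOD}) I would split $V_e=(\tilde V+V_G)+V_Q$ and treat the two pieces separately. For the state part $V=\tilde V+V_G$ the computation of $\dot V+2\alpha V$ along (\ref{x:N}) is identical to the one carried out in the proof of Theorem \ref{thm:ISS:TOD:N} (Appendix A): substitute the dynamics via the descriptor method, apply Jensen's inequality to the $R_0$- and $R_1$-integral terms of $\tilde V$ (using (\ref{LMI1:thm:ISS:TOD:N}) to bound the convex combination of the two interval-delay pieces), and note that $\dot V_G+2\alpha V_G=\sum_i(\tau_M-\eta_m)|\sqrt{G_i}C_i\dot x(t)|^2$, which supplies the term $(\tau_M-\eta_m)\sum_l C_l^TG_lC_l$ in $H$. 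This reproduces exactly the blocks $\Sigma_i$, $H$, $F^i$, $\Xi_i$ of (\ref{LMI2:thm:ISS:TOD:N}), now with the specific $G_i$ of \eqref{G:RR} inserted.

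The error part is where the RR structure enters. Since $\dot e\equiv 0$ on $[t_k,t_{k+1})$, only the time weight in $V_Q$ differentiates, and by the cited identity ${d\over dt}{t_{k+1}-t\over j(\tau_M-\eta_m)}=-{1\over j(\tau_M-\eta_m)}\leq -{1\over (N-1)(\tau_M-\eta_m)}$ we obtain $\dot V_Q\leq -{1\over(N-1)(\tau_M-\eta_m)}\sum_{j=1}^{N-1}|\sqrt{Q_{i_{k-j}^*}}e_{i_{k-j}^*}(t)|^2$. Using $t_{k+1}-t\leq t_{k+1}-t_k\leq\tau_M-\eta_m$ (cf.\ (\ref{tauM})), every weight in $V_Q$ is $\leq 1$, so $2\alpha V_Q\leq 2\alpha\sum_{j=1}^{N-1}|\sqrt{Q_{i_{k-j}^*}}e_{i_{k-j}^*}(t)|^2$. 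Under RR the indices $\{i_{k-1}^*,\dots,i_{k-N+1}^*\}$ run through exactly $\{1,\dots,N\}\setminus\{i_k^*\}$ (this is why $k\geq N-1$ is needed), so after reindexing $\dot V_Q+2\alpha V_Q\leq\sum_{i\neq i_k^*}e_i^T\big(2\alpha Q_i-{1\over(N-1)(\tau_M-\eta_m)}Q_i\big)e_i=\sum_{i\neq i_k^*}e_i^T\psi_i e_i$, which is precisely the set of diagonal error blocks $\psi_i$ of $\Upsilon_i$ in (\ref{thm1:phi:ISS:TOD:N}) evaluated at $U_i=Q_i/(N-1)$. Adding the state part, the left-hand side of (\ref{inequality1:lemma:TOD}) is bounded by exactly the quadratic form whose negativity is guaranteed by (\ref{LMI2:thm:ISS:TOD:N}); hence (\ref{inequality1:lemma:TOD}) holds.

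Finally, the $\alpha=0$ assertion follows by continuity: the inequalities (\ref{Q:Q12:N}), (\ref{LMI1:thm:ISS:TOD:N}) and (\ref{LMI2:thm:ISS:TOD:N}) are strict, so feasibility at $\alpha=0$ persists for all sufficiently small $\alpha_0>0$, yielding the same bounds with decay rate $\alpha_0$. I expect the main obstacle to be the error-part comparison of the previous paragraph: one must check that the time-varying weights of $V_Q$, whose individual decay rates $1/j$ differ from node to node, are all dominated by the single fixed penalty $\psi_i$ with $U_i=Q_i/(N-1)$; this is what legitimises reusing the TOD LMI (\ref{LMI2:thm:ISS:TOD:N}) verbatim, and it relies on the worst-case bound $1/j\geq 1/(N-1)$ together with $0<2\alpha(\tau_M-\eta_m)<1$. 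The state part is routine given Appendix A.
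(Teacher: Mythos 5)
Your proposal is correct and follows essentially the same route as the paper: the paper's own (very terse) proof consists precisely of invoking Lemma \ref{lemma:RR:N} (resp.\ Lemma \ref{lemma:TOD} for $N=2$), reusing the Appendix~A computation for the state part, and using the bound ${d\over dt}{t_{k+1}-t\over j(\tau_M-\eta_m)}\leq -{1\over (N-1)(\tau_M-\eta_m)}$ to match the error blocks $\psi_i$ with $U_i=Q_i/(N-1)$, which is exactly what you spell out. The only quibble is terminological: Appendix~A substitutes the dynamics $\dot x=\Xi_i\xi_i$ directly (with Jensen and the reciprocally convex bound) rather than via the descriptor method, but this does not affect your argument.
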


\begin{rk}\label{rem_ebound}
For $N=2$ and $\alpha=0$, the LMIs of Theorem
\ref{thm:ISS:TOD:N}   are more restrictive
than those of Theorem \ref{thm:ISS:RR:N}:
 \eqref{Q:Q12:N} of Theorem
\ref{thm:ISS:TOD:N} yields
$(N-1)U_i<Q_i<G_i$, whereas in Theorem \ref{thm:ISS:RR:N} we have
 $(N-1)^2U_i=(N-1)Q_i={G_i}$ that leads to
 larger $U_i$ for the same $G_i$. The latter helps for the feasibility of \eqref{LMI2:thm:ISS:TOD:N}, where $U_i>0$ appears on the main diagonal only (with minus).
However, 
Theorem \ref{thm:ISS:TOD:N} achieves ISS with respect to the full state $\col\{x, e\}$ and provides the solution bound
for $t\geq t_{0}$, while Theorem \ref{thm:ISS:RR:N}  guarantees only $x$-ISS.

Note that  Theorem \ref{thm:ISS:RR:N} under RR protocol guarantees
boundedness  of $e$ as well.
 Indeed, since $e(t_N)$ in \eqref{RRreset:N} depends on $x(0),\dots, x(t_N-\eta_N)$ and
 $t_N\leq N\tau_M$ (this can be verified similar to (\ref{skj1})),
relations \eqref{RRreset:N}  yield
$$|e_i(t)|^2\leq c'sup_{\theta\in[-N\tau_M,0]}|x(t+\theta)|^2, \ t\geq t_N$$
with some $c'>0$, which together with
(\ref{lem:V:bound:RR:N}) imply
$$|e_i(t)|^2\leq c''[e^{-2\alpha (t-t_{N-1})} V_e(t_{N-1})+\Delta^2] $$
for some $c''>0$ and all $t\geq t_N+N\tau_M$.
\end{rk}


\begin{rk}\label{polyt}
The LMIs of Theorems \ref{thm:ISS:TOD:N} and \ref{thm:ISS:RR:N}  are
affine in the system matrices. Therefore, in the case of system
matrices from an uncertain time-varying polytope
$$\begin{array}{lll}
\label{Omega}
  \Omega =\sum_{j=1}^{M}g_j(t)\Omega_j, \quad
  0\leq
 g_j(t)\leq 1,\\
 \sum_{j=1}^{M}g_j(t)=1, \quad
 \Omega_j =\left[\begin{array}{cccccc}A^{(j)}
 & B^{(j)} & D^{(j)} \end{array}\right],
 \end{array}
$$
one have to solve these LMIs simultaneously for all the $M$ vertices
$\Omega_j$, applying the same decision matrices.
\end{rk}

\section{Examples}\label{sec:examples}
\subsection{Example 1: uncertain inverted pendulum}

Consider an inverted pendulum mounted on a small car.
We focus on the stability analysis  in the absence of
disturbance. Following \cite{Geromel_IET07}, we assume that the friction coefficient
between the air and the car, $f_c$, and the air and the bar, $f_b$,
are not exactly known and time-varying: $f_c(t)\in [0.15, 0.25]$ and
$f_b(t)\in [0.15, 0.25]$. The linearized model can be written as
(\ref{sys}), where the matrices $A=E^{-1} A_f$ and $B=E^{-1} B_0$  are determined from
$$
\begin{array}{lllll}
E=\left[\begin{array}{cccc}
1 & 0 & 0 & 0 \\
0 & 1 & 0 & 0 \\
0 & 0 & 3/2 & -1/4 \\
0 & 0 &  -1/4 & 1/6
\end{array} \right],\\
A_f=\left[\begin{array}{cccc}
0 & 0 & 1 & 0 \\
0 & 0 & 0 & 1 \\
0 & 0 & -(f_c+f_b) & f_b/2 \\
0 & 5/2 &  f_b/2 & -f_b/3
\end{array} \right] \quad {\rm{and}} \quad  B_0=\begin{bmatrix}0\\0\\1\\0\end{bmatrix}.
\end{array}$$
%
Here $A$ belongs to  uncertain polytope,
 defined by   four
vertices corresponding to $f_c/f_b=0.15$ and $ f_c/f_b=0.25$.
%
The pendulum can be stabilized by a state feedback $u(t)=Kx(t)$, where $x=[x_1,x_2,x_3,x_4]^T$, with
the gain
\begin{equation} \label{gainK} K=[11.2062 \ \ -128.8597 \ \
10.7823 \ \ -22.2629].\end{equation}
In this model, $x_1$ and  $x_2$
represent cart position and velocity, whereas $x_3,$ $x_4$ represent pendulum angle from vertical and its angular velocity
respectively.
In practice  $x_1,$ $x_2$ and $x_3$, $x_4$ (presenting spatially distributed
components of the state of the pendulum-cart system)
are not accessible simultaneously.
Suppose that the state variables  
are not accessible simultaneously. 
Consider first $N=2$ and
$$
C_1 = \left[
\begin{array}{ccccccccc}
       1 &0 &0 & 0 \\
       0&1& 0 & 0
     \end{array}\right],\
C_2 = \left[
\begin{array}{ccccccccc}
       0 & 0 &1 & 0 \\
       0&0 & 0 & 1
      \end{array}\right].
$$
The applied controller gain $K$ has the following blocks:
$$
K_1 =\left[
\begin{array}{ccccccccc}
       11.2062  &-128.8597
      \end{array}\right], \ \
K_2 = \left[
\begin{array}{ccccccccc}
       10.7823 & -22.2629
      \end{array}\right].
$$

%
%
For the  values of $\eta_m$ given in Table 1, we apply Theorems
\ref{thm:ISS:TOD:N} and \ref{thm:ISS:RR:N} with  $\alpha=0$, $b=0$  via Remark
\ref{polyt} and find the maximum values of $\tau_M= \MATI + \MAD$
that preserve the stability of the hybrid system
(\ref{x:N})-(\ref{reset:N})
with $\omega(t)=0$ with respect to $x$. 
%
%
From Table 1, it is observed that under TOD or RR protocol the conditions of Theorem \ref{thm:ISS:RR:N}
 possess less decision variables, and stabilize
the system for larger $\tau_M$ than the results in \cite{RR_SCL} under RR protocol.
 Moreover, when $\eta_m > {\tau_M \over 2} (\eta_m=0.02, 0.04)$,
our method is still feasible (communication delays are larger than the sampling intervals).
The computational time for solving the LMIs (in seconds) under the TOD protocol is essentially less
than that under RR protocol in \cite{RR_SCL} (till  36\% decrease).

\begin{table}[h]
\label{stability:RR:TOD:cart:poly}
\begin{center}
\caption{Example 1 (N=2): max. value of $\tau_M= \MATI + \MAD$}
\begin{tabular}{|c|c|c|c|c|c|c|c|c|} \hline
$\tau_M $ ${\large \backslash}$ $\eta_m$      &0       &0.005    & 0.01    & 0.02   & 0.04 & Decision\\
&     &   &    &    &  & variables\\
\hline \hline \cite{RR_SCL} (RR) & 0.023& 0.026& 0.029&  0.035& 0.046 & 146\\
\hline Theorem \ref{thm:ISS:TOD:N} (TOD)  & 0.014 & 0.018 & 0.021&0.029 &  0.044 & 84\\
\hline Theorem \ref{thm:ISS:RR:N} (TOD/RR)   & 0.025& 0.028& 0.031&  0.036& 0.047&72\\
 \hline
\end{tabular}
\end{center}
\end{table}


Consider next $N=4$, where  $C_1,\dots, C_4$ are the rows of $I_4$
and $K_1,\dots, K_4$ are the entries of $K$ given by \eqref{gainK}.
Here  the maximum values of $\tau_M$ 
that preserve
the stability of 
(\ref{x:N})-(\ref{reset:N}) with $\omega(t)=0$ with respect to $x$
are given in Table 2. Also here Theorem \ref{thm:ISS:RR:N} leads to
less conservative results than Theorem \ref{thm:ISS:TOD:N}.
\begin{table}[h]
\label{stability:RR:TOD:cart:poly}
\begin{center}
\caption{Example 1 (N=4): max. value of $\tau_M= \MATI + \MAD$}
 \begin{tabular}{|c|c|c|c|c|c|c|c|c|}\hline
$\tau_M $ ${\large \backslash}$ $\eta_m$      &0       &0.01\\
\hline Theorem \ref{thm:ISS:TOD:N} (TOD)  & 0.003 & 0.012 \\
\hline Theorem \ref{thm:ISS:RR:N} (RR)   & 0.006& 0.015\\
\hline
\end{tabular}
\end{center}
\end{table}
%

\subsection{Example 2: batch reactor}
 We illustrate the efficiency of the given conditions on the
 example of a batch reactor under the dynamic output feedback (see e.g., \cite{HeemelsTAC2010}), where $N=2$ and
$$\begin{array}{lllll}
A=\left[
\begin{array}{ccccccc}
1.380  & -0.208  & 6.715   & -5.676 \\
-0.581 & -4.2902 & 0       & 0.675 \\
1.067  & 4.273   & -6.654  & 5.893\\
0.048  & 4.273   & 1.343   & -2.104\\
\end{array} \right],
\end{array}$$
$$\begin{array}{lllll}
B=\left[ \begin{array}{ccccccc}
0      & 0  \\
5.679  & 0 \\
1.136  & -3.146  \\
1.136  & 0\\
\end{array} \right], \ C=\small\left[
\begin{array}{c} C_1
\\
\hline C_2
                          \end{array} \right]=\left[ \begin{array}{cccc}
1      & 0  & 1& -1
\\
\hline
 0      & 1  & 0& 0
\end{array} \right],\\
\left[ \begin{array}{ccccccc}
  A_c     &    \vline     & B_c \\
\hline
  C_c    &    \vline     & D_c \\
\end{array} \right]=
\left[ \begin{array}{ccccccc}
  \left. \begin{array}{cc}
0      & 0  \\
0       & 0
\end{array} \right.    &    \vline     &   \left. \begin{array}{cc}
0      & 1  \\
1       & 0
\end{array} \right.  \\
\hline
 \left. \begin{array}{cc}
-2      & 0  \\
0       & 8
\end{array} \right.    &    \vline     & \left. \begin{array}{cc}
0      & -2  \\
5       & 0
\end{array} \right. \\
\end{array} \right].
\end{array}$$
For the values of $\eta_m$ given in Table 3, we apply Theorems
\ref{thm:ISS:TOD:N} and \ref{thm:ISS:RR:N} with  $\alpha=0$, $b=0$
and find the maximum values of $\tau_M= \MATI + \MAD$ that
preserve the stability of the hybrid system
(\ref{x:N})-(\ref{reset:N}) with $\omega(t)=0$ with respect to $x$. From
Table 3 it is seen that the results of our method essentially
improve the results in \cite{HeemelsTAC2010}, and are more conservative than those obtained via the discrete-time approach. 
Recently in \cite{HeemelsSOSAut12} the same result $\tau_M=0.035$ as
ours in Theorem \ref{thm:ISS:RR:N} for $\eta_m=0,$ $\MAD=0.01$ has been achieved. In \cite{HeemelsSOSAut12} the sum of squares method is developed in the framework of
hybrid system approach. We note that the sum of squares method has
not been applied yet to ISS. Moreover, our conditions are simple
LMIs with a
fewer decision variables. 
When $\eta_m > {\tau_M \over 2} (\eta_m=0.03, 0.04)$, our method is
still feasible
 (communication delays are larger than the sampling intervals).
The computational time under the TOD protocol is essentially less
than that under RR protocol in \cite{RR_SCL} (till  32\%
decrease).
\begin{table}[h]
\label{lower:delay}
\begin{center}
\caption{Example 2: max. value of $\tau_M= \MATI + \MAD$ for
different $\eta_m$ } \begin{tabular}{|c|c|c|c|c|c|c|c|} \hline
$\tau_M $ ${\large \backslash}$ $\eta_m$      &0       &0.004    & 0.01   & 0.02   & 0.03    & 0.04\\
\hline \hline  \cite{HeemelsTAC2010}($\MAD=0.004$)  & 0.0108& 0.0133& -&  -& -& -\\
\hline  \cite{donkers_2011}($\MAD=0.03$) & 0.069& 0.069&0.069& 0.069& 0.069& -\\
\hline Theorem \ref{thm:ISS:TOD:N} (TOD)  & 0.019& 0.022&  0.027&  0.034& 0.042& 0.050\\
\hline Theorem \ref{thm:ISS:RR:N} (TOD/RR)  & 0.035& 0.037&  0.041&  0.047& 0.053& 0.059\\
\hline \cite{RR_SCL} (RR)  & 0.042& 0.044&  0.048&  0.053& 0.058& 0.063\\
 \hline
\end{tabular}
\end{center}
\end{table}



\section{Conclusions}
In this paper, a time-delay approach has been developed for the
ISS of NCS with scheduling protocols, 
variable transmission delays and variable sampling
intervals. 
A novel hybrid system model with time-varying delays in the
continuous dynamics and in the reset equations is introduced and a new
Lyapunov-Krasovskii method is developed.
The ISS conditions 
of the delayed hybrid system
are derived in terms of LMIs.
Differently from the existing (hybrid and discrete-time)
 methods on the stabilization of
NCS with scheduling protocols,
the time-delay approach allows non-small network-induced delay
(which is
not smaller than the sampling interval). 
Future work will involve  consideration of  more general NCS models,
including packet dropouts, packet
disordering, quantization and scheduling protocols for the actuator nodes.

\appendix
\section{Proof of Theorem \ref{thm:ISS:TOD:N}}
\begin{proof}
Consider $t\in[t_k, t_{k+1}), \ k\in {\Z_+}$ and define
$\xi_i(t)=\col\{x(t), x(t-\eta_m), x(t-\tau(t)), x(t-\tau_M), e_1(t),
\cdots, e_j(t), \cdots, e_N(t), \omega(t)\}$ with $i=i_k^* \in \N,$ $j \neq i$. Differentiating
$V_e(t)$ along (\ref{x:N}) and applying Jensen's
inequality, we have
$$ \begin{array}{lllllll}
\eta_m \int_{t-\eta_m}^{t} \dot
x^T(s)R_0 \dot x(s) ds \geq  \int_{t-\eta_m }^{t} \dot x^T(s)ds R_0 \int_{t-\eta_m}^{t}\dot x(s)ds\\
\hspace{3.8cm}=\xi_i^T(t)(F_2^i)^TR_0 F_2^i\xi_i(t),\\
-(\tau_M-\eta_m)\int_{t-\tau_M}^{t-\eta_m}\dot x^T(s)R_1 \dot x(s) ds\\
=-(\tau_M-\eta_m)\int_{t-\tau(t)}^{t-\eta_m}\dot x^T(s)R_1 \dot x(s) ds-(\tau_M-\eta_m)\int_{t-\tau_M}^{t-\tau(t)}\dot x^T(s)R_1 \dot x(s) ds\\
\leq -{\tau_M-\eta_m \over \tau(t)-\eta_m}\xi_i^T(t)\Big[[I_{n}\ 0_{n \times n]}F^i\Big]^{T}R_1 [I_{n }\ 0_{n \times n}]F^i\xi_i(t)\\
\hspace{.35cm}-{\tau_M-\eta_m \over \tau_M-\tau(t)}\xi_i^T(t)\Big[[0_{n \times n}\ I_{n}]F^i\Big]^{T}R_1 [0_{n \times n}\ I_{n}]F^i\xi_i(t)\\
\leq -\xi_i^T(t)(F^i)^T\Phi F^i\xi_i(t).
\end{array}$$
The latter inequality holds
if (\ref{LMI1:thm:ISS:TOD:N}) is feasible \cite{Park2011}. Then
$$\begin{array}{llll}
\dot {V_e}(t)+2\alpha V_e(t)-{1\over \tau_M-\eta_m} \sum_{l=1, l \neq
i}^N|\sqrt{U_l}e_l(t)|^2-2\alpha|\sqrt{Q_{i}}e_{i}(t)|^2- b |\omega(t)|^2\\
\leq \xi_i^T(t)[\Sigma_i+ \Xi_i^T H  \Xi_i- (F^i)^T\Phi
  F^i e^{-2\alpha \tau_M}]\xi_i(t)\leq 0,
\end{array}$$
if $\Sigma_i+ \Xi_i^T H  \Xi_i- (F^i)^T\Phi   F^i e^{-2\alpha
\tau_M}<0$, i.e., by Schur complement, if
(\ref{LMI2:thm:ISS:TOD:N}) is feasible.
 Thus due to Lemma \ref{lemma:TOD:N},
inequalities (\ref{Q:Q12:N}), (\ref{LMI1:thm:ISS:TOD:N}) and
(\ref{LMI2:thm:ISS:TOD:N}) imply (\ref{lem:V:bound:N}). \end{proof}

\section{Proof of Lemma \ref{lemma:RR:N}}
\begin{proof}
Since $|\omega(t)| \leq \Delta$,
(\ref{inequality1:lemma:TOD}) implies
\begin{equation}\begin{array}{llll}\label{lem1:RR:N}
 V(t, x_t, {\dot x}_t)&\leq& V_e(t)\\
 &\leq& e^{-2\alpha(t-t_k)}V_e(t_k)+ b\Delta^2\int_{t_k}^te^{-2\alpha (t-s)}ds,
\ t\in[t_k, t_{k+1}).
\end{array}\end{equation}
Note that
$
V_e(t_{k+1})= \tilde V_{|t=t_{k+1}}+V_{Q_{|t=t_{k+1}}}+V_{G_{|t=t_{k+1}}}.
$
Taking into
account  \eqref{RRreset:N} and the relations
$\tilde V_{t=t_{k+1}}=\tilde V_{t=t_{k+1}^- }$, $e(t_{k+1}^-)=e(t_k),$
 we obtain due to (\ref{G:RR}), (\ref{Gi:RR:N}) and (\ref{Q:RR:N}) for $k>N-1$
\begin{equation}\label{Tetk:RR:N}
\begin{array}{llllll}
\Theta_{k+1}\bydef V_e(t_{k+1})-V_e(t_{k+1}^-)\\
\hspace{.9cm}=[V_Q+V_G]_{t=t_{k+1}}-[V_Q+V_G]_{t=t_{k+1}^- }\\
\hspace{.9cm}\leq \!\!(\tau_M\!\!-\!\!\eta_m)e^{2\alpha [\!\tau_M\!+\!(N\!-\!2)(\tau_M\!-\!\eta_m)]}\!\Big[\!\sum_{ j=0}^{N-2}\int_{s_{k-j}}^{s_{k+1}}e^{2 \alpha(s\!-\!t_{k+1})}|\!\sqrt {Q_{i_{k-j}}^*}C_{i_{k-j}}^*\dot x (s)|^2ds\\
\hspace{1.2cm}-\sum_{i=1}^N(N-1)\int_{s_{k}}^{s_{k+1}}e^{2 \alpha
(s-t_{k+1})} |\sqrt {Q_i}C_i\dot x(s)|^2 ds\Big],
\end{array}\end{equation}
whereas for $k=N-1$ due to (\ref{Gi0:RR:N}) and (\ref{Q:RR:N})
$$
\begin{array}{llllll}
\Theta_{N} \leq \sum_{ j=0}^{N-2}(\tau_M-\eta_m)e^{2\alpha [\tau_M+(N-2)(\tau_M-\eta_m)]}\\
\hspace{1.2cm} \times \int_{s_{N-1-j}}^{s_{N}}e^{2 \alpha (s-t_{N})}|\sqrt {Q_{i_{N-1}-j}^*} C_{i_{N-1}-j}^*\dot x(s)|^2ds\\
\hspace{1cm} -\sum_{i=1}^N(\tau_M-\eta_m)\int_{s_0}^{s_{N}}e^{2 \alpha (s-t_{N})}
|{\sqrt{G_i}}C_i\dot x(s)|^2ds\\
\hspace{.7cm}\leq -(\tau_M-\eta_m)e^{2\alpha [\tau_M+(N-2)(\tau_M-\eta_m)]}\int_{s_0}^{s_{N}}e^{2 \alpha (s-t_{N})}\\
\hspace{1.1cm} \times[(N-2)\sum_{i=1}^N|{\sqrt{Q_i}}C_i\dot x(s)|^2+|{\sqrt{Q_l}}C_l\dot x(s)|^2_{|l=i^*_{N}}]ds.
\end{array}$$
We will prove  (\ref{jump:tkRR:N}) by induction. For $k=N-1$ we have
$$\begin{array}{llllll}
 V_e(t_{N})\leq \Theta_{N}+V_e(t_{N}^-)\\
\hspace{1.1cm} \leq -(\tau_M-\eta_m)e^{2\alpha [\tau_M+(N-2)(\tau_M-\eta_m)]}\\
\hspace{1.5cm}\times \int_{s_0}^{s_{N}}e^{2 \alpha (s-t_{N})}[(N\!-\!2)
 \sum_{i=1}^N|{\sqrt{Q_i}}C_i\dot x(s)|^2+|{\sqrt{Q_l}}C_l\dot
 x(s)|^2_{|l=i^*_{N}}]ds\\
\hspace{1.5cm}+e^{-2\alpha(t_{N}-t_{N-1})}V_e(t_{N-1})+b\Delta^2\int_{t_{N-1}}^{t_{N}}e^{-2\alpha (t_{N}-s)}ds,
\end{array}$$
which implies (\ref{jump:tkRR:N}).

Assume that (\ref{jump:tkRR:N}) holds for $k-1$ ($k\geq N-1$):
$$\begin{array}{lll} V_e(t_{k}) \leq e^{-2\alpha(t_{k}-t_{N-1})}V_e(t_{N-1})+\Psi_{k}+b\Delta^2\int_{t_{N-1}}^{t_{k}}e^{-2\alpha (t_{k}-s)}ds.\end{array}$$
Then due to (\ref{lem1:RR:N}) for $t=t_{k+1}^-$ we obtain
$$\begin{array}{llllll}
 V_e(t_{k+1})\leq \Theta_{k+1}+e^{-2\alpha(t_{k+1}-t_{k})}\Psi_k+e^{-2\alpha(t_{k+1}-t_{N-1})}V_e(t_{N-1})\\
\hspace{1.7cm} +b\Delta^2\int_{t_{N-1}}^{t_{k+1}}e^{-2\alpha (t_{k+1}-s)}ds. \end{array}$$
We have
$$\begin{array}{llllll}
e^{-2\alpha(t_{k+1}-t_{k})}\Psi_k
=-(\tau_M-\eta_m)e^{2\alpha [\tau_M+(N-2)(\tau_M-\eta_m)]}\\
\quad \times \Big[\sum_{l=0}^{N-3}(N-2-l)\int_{s_{k-l-2}}^{s_{k}}e^{2 \alpha (s-t_{k+1})}\times |\sqrt{Q_{i_{k-1-l}^*}}C_{i_{k-1-l}^*}\dot x(s)|^2ds\\
\quad \quad+(N-1)\int_{s_{k-1}}^{s_{k}}e^{2 \alpha (s-t_{k+1})}
|\sqrt{Q_{i_{k}^*}}C_{i_{k}^*}\dot x(s)|^2ds\Big] \\
= -(\tau_M-\eta_m)e^{2\alpha [\tau_M+(N-2)(\tau_M-\eta_m)]}\Big[\sum_{j=0}^{N-2}(N-1-j)\\
 \quad \quad \times \int_{s_{k-j-1}}^{s_{k}}e^{2 \alpha (s-t_{k+1})}|\sqrt{Q_{i_{k-j}^*}}C_{i_{k-j}^*}\dot x(s)|^2ds\Big].
 \end{array}$$
Then, taking into account
(\ref{Tetk:RR:N}), we find
$$\begin{array}{llllll}
\Theta_{k+1}+e^{-2\alpha(t_{k+1}-t_{k})}\Psi_k
\leq(\tau_M-\eta_m)e^{2\alpha [\tau_M+(N-2)(\tau_M-\eta_m)]}\\
\quad \times \Big[\sum_{ j=0}^{N-2}\int_{s_{k-j-1}}^{s_{k+1}}e^{2 \alpha
(s-t_{k+1})}|\sqrt {Q_{i_{k-j}}^*}C_{i_{k-j}}^*\dot x (s)|^2ds\\
\quad \quad-\sum_{i=1}^N(N-1)\int_{s_{k}}^{s_{k+1}}e^{2 \alpha(s-t_{k+1})} |\sqrt {Q_i}C_i\dot x(s)|^2 ds\\
\quad \quad -\sum_{j=0}^{N-2}(N-1-j)
 \int_{s_{k-j-1}}^{s_{k}}e^{2 \alpha (s-t_{k+1})}\times |\sqrt{Q_{i_{k-j}^*}}C_{i_{k-j}^*}\dot x(s)|^2ds\Big]\\
 \leq-(\tau_M-\eta_m)e^{2\alpha [\tau_M+(N-2)(\tau_M-\eta_m)]}\\
\quad \times \Big[\sum_{j=0}^{N-2}(N-2-j)\int_{s_{k-j-1}}^{s_{k+1}}e^{2 \alpha (s-t_{k+1})}\times |\sqrt{Q_{i_{k-j}^*}}C_{i_{k-j}^*}\dot x(s)|^2ds\\
\quad+(N\!-\!1)\int_{s_k}^{s_{k+1}}\!\!e^{2 \alpha (s-t_{k+1})}
   |\sqrt{Q_{i_{k+1}^*}}C_{i_{k+1}^*}\dot x(s)|^2ds\Big]\\
= \Psi_{k+1},
\end{array}$$
which implies (\ref{jump:tkRR:N}).
Hence, (\ref{jump:tkRR:N})  and (\ref{lem1:RR:N}) yield
(\ref{lem:V:bound:RR:N}). \end{proof}

\end{document}